\newcommand{\setn}[1]{[#1]}
\newcommand{\oddsetn}[1]{[#1]^{odd}}
\newcommand{\naturals}{\mathbb{N}}
\newcommand{\setnocond}[1]{\{#1\}}
\newcommand{\setcond}[2]{\{\, #1 \mid #2 \,\}}
\newcommand{\wordletter}[2]{#1{[#2]}}
\newcommand{\odd}[1]{\textit{odd}(#1)}
\newcommand{\even}[1]{\textit{even}(#1)}
\newcommand{\vertex}[2]{\langle {#1}, {#2} \rangle}
\newcommand{\alphabet}{\Sigma}
\newcommand{\infwords}{\alphabet^\omega}
\newcommand{\langsymb}[0]{\mathcal{L}}
\newcommand{\lang}[1]{\langsymb(#1)}
\newcommand{\inflang}[1]{\langsymb(#1)}
\newcommand{\states}{Q}
\newcommand{\trans}{\delta}
\newcommand{\ntrans}{\delta_1}
\newcommand{\jtrans}{\delta_t}
\newcommand{\dtrans}{\delta_2}
\newcommand{\inits}{I}
\newcommand{\acc}{F}
\newcommand{\run}{\rho}
\newcommand{\range}[1]{\setn{#1}}
\newcommand{\irange}[1]{{\langle #1 \rangle}}
\newcommand{\A}{\mathcal{A}}
\newcommand{\D}{\mathcal{D}}
\newcommand{\R}{\mathcal{R}}
\newcommand{\bigO}{\mathcal{O}}
\newcommand{\cover}[2]{\leq^{#1}_{#2}}
\newcommand{\buchi}{B\"uchi\xspace}
\newcommand{\rkc}{\textsf{RKC}\xspace}
\newcommand{\slc}{\textsf{SLC}\xspace}
\newcommand{\size}[1]{|#1|}
\newcommand{\infstates}[1]{\mathit{inf}({#1})}
\newenvironment{customdef}[1]
{\innercustomdef}{\endinnercustomdef}
\newtheorem{definition}{Definition}
\newtheorem{lemma}{Lemma}
\newtheorem{theorem}{Theorem}
\newtheorem{corollary}{Corollary}
\newtheorem{remark}{Remark}
\newtheorem{example}{Example}
\title{On the Power of Unambiguity in \buchi Complementation}
\author{Yong Li
\institute{State Key Laboratory of Computer Science, Institute of Software, \\ Chinese Academy of Sciences}
\email{liyong@ios.ac.cn}
\and
Moshe Y. Vardi
\institute{Rice University}
\email{vardi@cs.rice.edu}
\and 
Lijun Zhang
\institute{State Key Laboratory of Computer Science, Institute of Software, \\ Chinese Academy of Sciences}
\institute{University of Chinese Academy of Sciences}
\institute{Institute of Intelligent Software, Guangzhou}
\email{zhanglj@ios.ac.cn}
}
\begin{document}
\maketitle

\begin{abstract}
In this work, we exploit the power of \emph{unambiguity} for the complementation problem of \buchi automata by utilizing reduced run directed acyclic graphs (DAGs) over infinite words, in which each vertex has at most one predecessor.
We then show how to use this type of reduced run DAGs as a \emph{unified tool} to optimize \emph{both} rank-based and slice-based complementation constructions for \buchi automata with a finite degree of ambiguity.
As a result, given a \buchi automaton with $n$ states and a finite degree of ambiguity, the number of states in the complementary \buchi automaton constructed by the classical rank-based and slice-based complementation constructions can be improved, respectively, to $2^{\bigO(n)}$ from $2^{\bigO( n \log n)}$ and to $\bigO(4^n)$ from $\bigO( (3n)^n)$.
\end{abstract}

\section{Introduction}\label{sec:intro}

The complementation of nondeterministic \buchi automata on words (NBWs)~\cite{buchi90decision} is a classic problem for NBWs and is the fundamental construction for many other important questions such as model checking\cite{DBLP:conf/lics/VardiW86} and program-termination analysis \cite{DBLP:conf/cav/HeizmannHP14}.
For instance, the complementation of NBWs is particularly valuable to model checking, when both the system $A$ and the specification $B$ are given as NBWs.
A model-checking problem essentially asks whether the behavior of the system $A$ satisfies the specification $B$.
In automata-based model checking~\cite{DBLP:conf/lics/VardiW86} framework,
this model-checking problem reduces to a language-containment problem between the NBWs $A$ and $B$.
The standard approach to solving the language-containment problem between $A$ and $B$ relies on the complementation of $B$;
one first has to construct a complementary automaton $B^c$ such that $\lang{B^c} = \infwords \setminus \lang{B}$ and then checks language emptiness of $\lang{A}\cap\lang{B^c}$. Various implementations of this approach with optimizations~\cite{DBLP:conf/cav/AbdullaCCHHMV10, DBLP:conf/concur/AbdullaCCHHMV11, DBLP:journals/corr/abs-0902-3958, DBLP:journals/lmcs/ClementeM19} have been proposed to improve its practical performance. All the implementations above, however, directly or indirectly, resort to constructing $B^c$, which can be exponentially larger than $B$~\cite{Schewe09,Yan/08/lowerComplexity}.

The complementation of \buchi automata is also a key component in the automata-based program-termination checking framework proposed in  \cite{DBLP:conf/cav/HeizmannHP14}.
This framework starts with a termination proof of a sample path of the given program and then generalizes that path to a \buchi automaton, whose language (by construction) represents a set of terminating paths.
All these terminating paths are then removed from the program.
The removal of those paths is done by automata difference operation, involved with \buchi complementation and intersection.
By iteratively removing terminating paths, the framework may obtain an empty program in the end, thus also proving the termination of the program.
It has been shown in \cite{DBLP:conf/pldi/ChenHLLTTZ18} that efficient complementation algorithms for \buchi automata can significantly improve the performance of the program-termination checking framework.

In this work, we focus on the complementation of NBWs.
The complexity for complementing NBWs has been proved to be $\Omega((0.76n)^n)$~\cite{Yan/08/lowerComplexity,Schewe09}.
A classic line of research on complementation aims at developing optimal (or close to optimal) complementation algorithms. Currently there are mainly four types of practical complementation algorithms for NBWs, namely \emph{Ramsey-based}~\cite{sistla1987complementation}, \emph{determinization-based}~\cite{safra1988complexity}, \emph{rank-based}~\cite{kupferman2001weak} and \emph{slice-based}~\cite{kahler2008complementation} algorithms.
These algorithms, however, all unavoidablely lead to a super-exponential growth in the size of $B^c$ in the worst case~\cite{Yan/08/lowerComplexity}. 

With the growing understanding of the worst-case complexity of those algorithms, searching for specialized complementation algorithms for certain subclasses of NBWs with better complexity has become an important line of research. For instance, complementing deterministic and semi-deterministic \buchi automata can be done in $\bigO(n)$~\cite{DBLP:journals/jcss/Kurshan87} and $\bigO(4^n)$~\cite{Blahoudek16}, respectively. Here we follow this line of research and aim at a subclass of NBWs with restricted nondeterminism. This type of NBWs is important, as in some contexts, especially in probabilistic 
verification, unrestricted nondeterminism in the automata representing the properties is problematic for the verification procedure. For instance, general NBWs cannot be used directly to verify properties over Markov chains, as they will cause imprecise probabilities in the product of the system and the property \cite{BustanRV04}. In turn, it is often necessary to construct their more deterministic counterparts in terms of other types of automata for the properties, for instance semi-deterministic \buchi automata or deterministic Rabin automata, which, however, adds exponential blowups of states \cite{CourcoubetisY95}.

To avoid state-space exponential blowup, earlier work sought to use of a type of automata called \emph{unambiguous nondeterministic \buchi automata} (UNBWs) in probabilistic verification~\cite{baier2016markov, DBLP:conf/setta/LiLTHZ16}, as UNBWs can be exponentially smaller than their equivalent deterministic automata~\cite{baier2016markov}. UNBWs~\cite{carton2003unambiguous} are a subclass of NBWs that accept with at most one run for each word, while their equivalent NBWs may have more than one accepting run, or even infinitely many accepting runs.
For example, by taking advantage of their unambiguity, the language-containment problem of certain proper subclasses of UNBWs has been proved to be solvable in polynomial time~\cite{bousquet2010equivalence}, while this problem is PSPACE-complete for NBWs~\cite{DBLP:conf/cav/KupfermanV96a}.

The complementation problem of a more general class than UNBWs, called \emph{finitely ambiguous nondeterministic \buchi automata} (FANBWs), which accept with finitely many runs for each word, was shown to be doable in $\bigO(5^n)$~\cite{rabinovich18}, in contrast to $2^{\Omega(n\log n)}$ for general NBWs \cite{Schewe09}.
Further, checking whether an NBW is an FANBW can be done in polynomial time~\cite{LodingP18}. Therefore, once an FANBW has been identified, the specialized complementation construction for FANBWs can be applied. In this paper, we focus here on an in-depth study of the complementation problem for FANBWs.

Our main technical tool is a construction of reduced directed acyclic graphs (DAGs) of runs of FANBWs over infinite words called \emph{co-deterministic run DAGs}, in which each vertex has at most one predecessor.
This type of co-deterministic run DAGs is previously introduced in~\cite{DBLP:journals/corr/abs-1110-6183,rabinovich18} and we defer the comparison of \cite{DBLP:journals/corr/abs-1110-6183,rabinovich18} and our construction to related works section.
We show that such co-deterministic run DAGs can be used to simplify and improve both the classical rank-based and slice-based complementation constructions. Our contributions are the following.

\begin{itemize}
\item 
First, we apply the co-deterministic run DAGs of FANBWs over infinite words, as a \emph{unified tool} to show how unambiguity works in \buchi complementation, to optimizing \emph{both} rank-based complementation (\rkc) and slice-based complementation (\slc).
\item
Second, we show that the construction of co-deterministic run DAGs in different complementation algorithms \cite{TsaiFVT14} helps to achieve simpler and theoretically better complementation algorithms for FANBWs. Given an FANBW with $n$ states, we show that the number of states of the complementary NBW constructed by the classical \rkc and \slc constructions can be improved, respectively, to $2^{\bigO(n)}$ from $2^{\bigO( n \log n)}$ and to $\bigO( 4^n)$ from $\bigO( (3n)^n)$.

\item
Finally, we reveal that \slc is basically an algorithm based on the construction of co-deterministic run DAGs and a specialized complementation algorithm for FANBWs. We also provide a language containment relation between states in the complementary NBWs of FANBWs, which can be used to improve the containment checking between an NBW and an (FA)NBW and also to reduce the number of redundant states in the complementary NBW.
\end{itemize}

\paragraph*{Related work.}
Run DAGs were introduced in \cite{kupferman2001weak} and co-deterministic run DAGs were first described in~\cite{DBLP:journals/corr/abs-1110-6183}.
In~\cite{DBLP:journals/corr/abs-1110-6183}, Fogarty and Vardi exploit co-deterministic run DAGs to complement \emph{reverse deterministic} \buchi automata with \rkc and the Ramsey-based algorithm, while we consider \rkc and \slc in this work.
In a reverse deterministic \buchi automaton, each state has only one predecessor for each letter, for which all run DAGs are already co-deterministic, as explained in Section~\ref{ssec:improved-rank-algo}, while the run DAGs of FANBWs may not be co-deterministic without our construction described in Section~\ref{sec:reduced-run-dag}.

Later co-deterministic run DAGs were constructed in~\cite{rabinovich18} under the name of \emph{narrow forest} for complementing FANBWs with the \slc construction only. Here we present it as co-deterministic run DAGs to serve as a unified tool for explaining concepts in both \rkc and \slc constructions. A subtle difference between the construction of co-deterministic run DAGs in~\cite{rabinovich18} and ours is as follows. To construct a co-deterministic run DAG over $w \in \infwords$, Rabinovich~\cite{rabinovich18} makes use of a transducer $\mathcal{T}$ that chooses one predecessor for each vertex at current level, while our construction utilizes a transition function to make the sets of successors of each pair of vertices at current level disjoint with each other, as given in Definition~\ref{def:edge-relation-e}.

More significantly, for complementation, we applied co-deterministic run DAGs to \emph{both} \rkc~\cite{kupferman2001weak} and \slc as presented in~\cite{DBLP:conf/birthday/VardiW08}.
(The complementation construction proposed in~\cite{rabinovich18} is a variant of \slc as introduced in~\cite{kahler2008complementation}.)
The comparison of the construction in~\cite{rabinovich18} and our improvement over \slc is as follows.
First, the complementary NBW constructed in~\cite{rabinovich18} is a UNBW with at most $\bigO(5^n)$ states;
this complementary NBW is the product automaton of the transducer $\mathcal{T}$, a \buchi automaton $\mathcal{C}$ for expressing unambiguity and a \buchi automaton $\D$ for accepting all possible ways to construct co-deterministic DAGs over $w \notin \lang{\A}$.
Our complementary NBW is not required to be a UNBW, since we are interested in complementation for containment checking.
Thus, the bound of $\bigO(5^n)$ in~\cite{rabinovich18} is exponentially higher than the bound of $\bigO(4^n)$ in this work.
Indeed, the product automaton of $\mathcal{T}$ and $\D$ in~\cite{rabinovich18} does yield a complementary NBW with $\bigO(4^n)$ states, but this construction and complexity were not explicitly given in~\cite{rabinovich18}.

Second, the construction in~\cite{rabinovich18} and our \slc-based construction are both based on reduced DAGs in which each vertex has at most one predecessor. These two constructions, however, are technically different and have different emphases. The construction in~\cite{rabinovich18} aims at building a complementary NBW $\A^c$ that is unambiguous, based on building product of three automata, in which each automaton fulfills part of the desired functionality for $\A^c$. For instance, $\mathcal{C}$ takes care of unambiguity and $\mathcal{D}$ obtains the complementary language. While our focus is on a complementation construction for containment checking. In contrast to building product automata in~\cite{rabinovich18}, our construction in Section~\ref{ssec:opt-slice-algo} takes a tuple of sets of states of $\A$ as a state in the complementary automaton $\A^c$ of $\A$ and performs directly on those tuples for computing successors on-the-fly, following the idea of the NCSB complementation for semi-deterministic \buchi automata in~\cite{Blahoudek16}. Various subsumption relations have been proposed in~\cite{DBLP:conf/pldi/ChenHLLTTZ18} for this representation of states in the NCSB complementation and help to reduce the number of states in $\A^c$, even improving termination analysis of programs. Inspired by~\cite{DBLP:conf/pldi/ChenHLLTTZ18}, we can also define a subsumption relation between states in $\A^c$ (see Corollary~\ref{coro:inclusion}) by our construction, which can be used to improve the containment checking between an NBW and an (FA)NBW and to reduce the number of states in $\A^c$.

\paragraph*{Organization of the paper.}
In the remainder of this paper, we first recap some definitions about \buchi automata in Section~\ref{sec:preliminaries} and then introduce the concept of co-deterministic run DAGs in Section~\ref{sec:reduced-run-dag}. We present our improved algorithms for the rank-based and slice-based algorithms in Section~\ref{sec:rank-based} and Section~\ref{sec:slice-based}, respectively. Finally we conclude the paper with some future works in Section~\ref{sec:conclusion}.

\section{Preliminaries}\label{sec:preliminaries}

We fix an \emph{alphabet} $\alphabet$.
A \emph{word} is an infinite sequence $w$ of letters in $\alphabet$.
We denote by $\infwords$ the set of all (infinite) words.
A \emph{language} is a subset of $\infwords$.
Let $L$ be a language and the complement language of $L$ is denoted by $L^c$, i.e., $L^c = \infwords \setminus L$.
Let $\run$ be a sequence of elements:
we denote by $\wordletter{\run}{i}$ the $i$-th element of $\run$.
Let $n$ be a natural number;
we denote by $\range{n}$ the set of numbers $\setnocond{0, 1, \cdots, n}$, $\oddsetn{n}$ the set of odd numbers in $\range{n}$ and $\irange{n}$ the set of numbers $\range{n} \setminus \setnocond{0}$.

A \emph{nondeterministic B\"uchi automaton on words} (NBW) is a tuple $\A = (\states, \inits, \trans, \acc)$, where $\states$ is a finite set of states, $\inits \subseteq \states$ is a set of initial states, $\trans : \states \times \alphabet \rightarrow 2^{\states}$ is a transition function and $\acc \subseteq \states$ is a set of accepting states. We extend $\trans$ to sets of states, by letting $\trans(S, a) = \bigcup_{q \in S} \trans(q, a)$. We assume that each NBW $\A$ is \emph{complete} in the sense that for each state $q \in \states$ and $a \in \alphabet$, $\trans(q, a) \neq \emptyset$. 
A \emph{run} of $\A$ on a word $w$ is an infinite sequence of states $\run = q_{0} q_{1}\cdots$ such that $q_{0} \in \inits$ and for every $i > 0$, $q_{i} \in \trans(q_{i-1}, a_{i})$. We denote by $\infstates{\run}$ the set of states that occur infinitely often in the run $\run$. A word $w \in \infwords$ is \emph{accepted} by $\A$ if there exists a run $\run$ of $\A$ over $w$ such that $\infstates{\run} \cap \acc \neq \emptyset$.
We denote by $\inflang{\A}$ the \emph{language} recognized by $\A$, i.e., the set of words accepted by $\A$.

Let $\A$ be an NBW.
A complementary NBW of $\A$ is an NBW that accepts the complementary language $\infwords \setminus \inflang{\A}$ of $\inflang{\A}$;
we denote by ${\A}^S$ the automaton $(\states, S, \trans, \acc)$ obtained from $\A$ by setting its initial state set to the set $S \subseteq \states$.
In particular, we use $\A^q$ as the shorthand for $\A^{\setnocond{q}}$.
We say a state $q$ of $\A$ \emph{subsumes} a state $q'$ of $\A$ if $\lang{A^{q'}} \subseteq \lang{A^{q}}$.
We classify $\A$ into following types of NBWs according to their transition structures:
(1) \emph{nondeterministic} if $\size{\inits} > 1$ or $\size{\trans(q, a)} > 1$ for a state $q \in \states$ and $a \in \alphabet$,
(2) \emph{deterministic} if $\size{\inits} = 1$ and for each $q \in \states$ and $a \in \alphabet$, $\size{\trans(q, a)} \leq 1$, and
(3) \emph{reverse deterministic} if for each state $q' \in \states$, $\A$ has at most one state $q$ for each $a \in \alphabet$ such that $q ' = \trans(q, a)$.

From the perspective of the number of accepting runs of $\A$, we have following types of NBWs.
\begin{definition}
\label{def:fanbw}
Let $\A$ be an NBW and $k$ a positive integer.
We say $\A$ is
(1) finitely ambiguous (an FANBW) if for each $w \in \lang{\A}$, the number of accepting runs of $\A$ over $w$ is finite; and
(2) $k$-ambiguous if for each $w \in \lang{\A}$, the number of accepting runs of $\A$ over $w$ is no greater than $k$, and (3) unambiguous if it is $1$-ambiguous.
\end{definition}

By Definition~\ref{def:fanbw}, it holds that both $k$-ambiguous NBWs and unambiguous NBWs are special classes of FANBWs. For instance, the NBW $\A$ depicted in Figure~\ref{fig:example} is a 2-ambiguous NBW, thus also an FANBW, as $(q_{0})^{i+1} q_{1}^{\omega}$ and $(q_{0})^{i+1} q_{2} q_{1}^{\omega}$ are the only two accepting runs for accepting word $a^{i} b^{\omega} \in \lang{\A}$ where $i \geq 0$.

\begin{figure}
\centering
\includegraphics[scale=0.9]{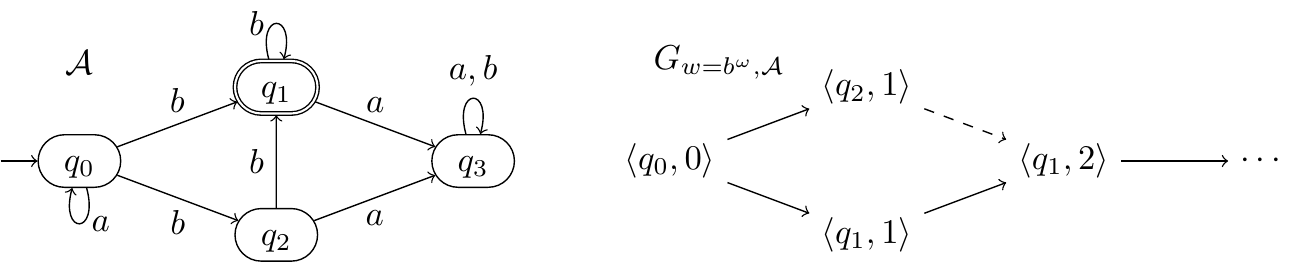}
\caption{An FANBW $\A$ with $\inits = \setnocond{q_{0}}$ and $\acc = \setnocond{q_{1}}$ and the run DAG $G_{w, \A}$ over $b^{\omega}$. }\label{fig:example}
\end{figure}

\section{Co-Deterministic Run DAGs for FANBWs}
\label{sec:reduced-run-dag}
In this section, we first describe the concept of run DAGs of an NBW over a word $w$, introduced in \cite{kupferman2001weak}.
We then describe the co-deterministic run DAGs for FANBWs as a unified tool for both \rkc and \slc constructions by making use of the finite ambiguity in FANBWs.
In the remainder of the paper, we use DAGs as the shorthand for run DAGs.

Let $\A = (\states, \inits, \trans, \acc)$ be an NBW and $w = a_0 a_1 \cdots$ be an infinite word. The DAG $G_{w, \A} = \langle V, E\rangle$ of $\A$ over $w$ is defined as follows:
\begin{itemize}
\item Vertices: $V \subseteq \states \times \naturals$ is the set of vertices  $\bigcup_{l \geq 0}V_{l} \times \setnocond{l}$ where $V_{0} = \inits$ and $V_{l + 1} := \trans(V_{l}, a_l)$ for every $l \geq 0$.
\item Edges: There is an edge from $\langle q, l\rangle$ to $\langle q', l'\rangle$ iff $l' = l + 1$ and $q' \in \trans(q, a_l)$. 
\end{itemize}

A vertex $\vertex{q}{l}$ is said to be on level $l$ and there are at most $\size{\states}$ states on each level. A vertex $\vertex{q}{l}$ is an \emph{$\acc$-vertex} if $q \in \acc$.
A finite/infinite sequence of vertices $\gamma = \vertex{q_{0}}{0}\vertex{q_{1}}{1} \cdots$ is called a \emph{branch} of $G_{w, \A}$ if $q_{0} \in \inits$ and for each $l \geq 0$, there is an edge from $\vertex{ q_{l}}{ l}$ to $\vertex{ q_{l + 1}}{ l+ 1}$.
An \emph{$\omega$-branch} of $G_{w, \A}$ is a branch of infinite length.
A \emph{fragment} $\vertex{ q_{l}}{ l}\vertex{ q_{l+1}}{ l+1} \cdots$ of $\gamma$ is said to be a branch from the vertex $\vertex{ q_{l}}{ l}$; a fragment $\vertex{ q_{l}}{ l} \cdots \vertex{ q_{l+k}}{ l+k}$ of $\gamma$ is said to be a \emph{path} from $\vertex{ q_{l}}{ l}$ to $\vertex{ q_{l+k}}{ l+k}$, where $k \geq 1$. A vertex $\vertex{ q_{j}}{ j}$ is \emph{reachable} from $\vertex{ q_{l}}{ l}$ if there is a path from $\vertex{ q_{l}}{ l}$ to $\vertex{ q_{j}}{ j}$. We call a vertex $\vertex{q}{l}$ \emph{finite} if there are no $\omega$-branches in $G_{w, \A}$ starting from $\vertex{q}{l}$; and we call a vertex $\vertex{q}{l}$  \emph{$\acc$-free} if it is not finite and no $\acc$-vertices are reachable from $\vertex{q}{l}$ in $G_{w, \A}$.

There is a bijection between the set of runs of $\A$ on $w$ and the set of $\omega$-branches in $G_{w,\A}$. To a run $\run = q_{0} q_{1} \cdots$ of $\A$ over $w$ corresponds an $\omega$-branch $\hat{\run} = \langle q_{0}, 0 \rangle\langle q_{1}, 1 \rangle \cdots$. Therefore, $w$ is accepted by $\A$ if and only if there exists an $\omega$-branch in $G_{w, \A}$ that visits $\acc$-vertices infinitely often; we say that such an $\omega$-branch is \emph{accepting}; $G_{w, \A}$ is accepting if and only if there exists an accepting $\omega$-branch in $G_{w, \A}$.

Assume that $\A$ is an FANBW. Then an accepting $\omega$-branch in $G_{w, \A}$, if exists, only merges with other (accepting) $\omega$-branches for finitely many times.
That is, there exists a level $k \geq 1$ such that all vertices after level $k$ on an accepting $\omega$-branch have exactly one predecessor;
we call the level $k$ a \emph{separate level}.
We formalize this property of $G_{w, \A}$ in Lemma~\ref{lem:vertices-one-pred}.

\begin{restatable}[Separate Levels of Accepting DAGs of FANBWs]{lemma}{separateLevelDAG}
\label{lem:vertices-one-pred}
Let $\A$ be an FANBW and $G_{w, \A}$ the accepting DAG of $\A$ over $w \in \lang{\A}$.
Then there must exist a separate level $k \geq 1$ in $G_{w,\A}$.
\end{restatable}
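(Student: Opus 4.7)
The plan is to prove \cref{lem:vertices-one-pred} by contradiction, exploiting the fact that a vertex with multiple predecessors on an accepting $\omega$-branch lets us manufacture extra accepting runs. First, fix any accepting $\omega$-branch $\hat\rho = v_0 v_1 v_2 \cdots$ of $G_{w, \A}$; such a branch exists since $w \in \lang{\A}$. For each vertex $v$ at level $l$, let $N(v)$ denote the number of distinct finite paths in $G_{w,\A}$ from an initial vertex $\langle q_0, 0 \rangle$ with $q_0 \in \inits$ to $v$. Note that $N(v) \geq 1$ for every vertex $v$, since the recursion $V_{l+1} = \trans(V_l, a_l)$ with $V_0 = \inits$ forces every vertex in the DAG to be reachable from an initial vertex.

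The central observation is the recurrence $N(v_{l+1}) = \sum_u N(u)$, where $u$ ranges over the predecessors of $v_{l+1}$ in $G_{w,\A}$. Consequently $N(v_{l+1}) \geq N(v_l)$ in general, and $N(v_{l+1}) \geq N(v_l) + 1$ whenever $v_{l+1}$ has a predecessor other than $v_l$ (since that extra predecessor contributes at least one additional path). Now suppose for contradiction that no separate level exists, i.e.\ for every $k \geq 1$ there is some $l > k$ for which $v_{l+1}$ has at least two predecessors in $G_{w,\A}$. Then $N(v_l)$ strictly increases infinitely often, so $N(v_l) \to \infty$.

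To close the argument, I will observe that each finite path $\pi$ from an initial vertex to $v_l$, concatenated with the infinite suffix $v_l v_{l+1} \cdots$ of $\hat\rho$, is a valid $\omega$-branch of $G_{w,\A}$ by the definition of DAG edges, and it is accepting because it shares the $\acc$-visiting tail of $\hat\rho$. Distinct prefixes plainly yield distinct $\omega$-branches and hence distinct accepting runs of $\A$ on $w$, so $\A$ admits at least $N(v_l)$ accepting runs on $w$ for every $l$. Combined with $N(v_l) \to \infty$, this forces infinitely many accepting runs on $w$, contradicting the finite ambiguity of $\A$. The only mild obstacle is bookkeeping around $N(\cdot)$: checking that every DAG vertex is reachable (immediate from the DAG construction) and that distinct finite prefixes really give distinct runs (evident once written out). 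The substantive content—multiple predecessors strictly inflate $N$, and $N$ lower-bounds the number of accepting runs—is short and direct.
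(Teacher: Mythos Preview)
Your counting argument via $N(v)$ is correct and is essentially the same idea as the paper's proof (which simply asserts ``otherwise, there will be infinitely many accepting branches'' without spelling out the count). There is, however, one logical slip to repair: you fix a single accepting $\omega$-branch $\hat\rho$ \emph{before} negating the conclusion, and then write ``no separate level exists, i.e.\ for every $k \geq 1$ there is some $l > k$ for which $v_{l+1}$ has at least two predecessors.'' That ``i.e.'' is not right: the negation of ``a separate level exists'' says that for every $k$ some vertex on \emph{some} accepting $\omega$-branch (not necessarily your fixed $\hat\rho$) past level $k$ has two predecessors. Your argument, as written, only shows that each individual accepting $\omega$-branch eventually has single-predecessor vertices; you still need to combine these per-branch thresholds into a single $k$.

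The fix is immediate and is exactly what the paper does: finite ambiguity gives finitely many accepting $\omega$-branches $\hat\rho_1,\dots,\hat\rho_m$, your $N(\cdot)$ argument yields a threshold $h_i$ for each $\hat\rho_i$, and $k = \max_i h_i$ is the desired separate level. Alternatively, after assuming no separate level exists, apply pigeonhole over the finitely many accepting branches to obtain one particular branch with infinitely many multi-predecessor vertices, and then run your $N(\cdot)$ argument on that branch.
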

\begin{proof}
Since $\A$ is an FANBW, there are only finitely many accepting $\omega$-branches in $G_{w, \A}$.
Therefore, an accepting $\omega$-branch in $G_{w, \A}$ only merges with other (accepting) $\omega$-branches for finitely many times. It follows that given an accepting $\omega$-branch $\hat{\run}$ in $G_{w, \A}$, there must exist a separate level $h \geq 1$ such that each vertex $\wordletter{\hat{\run}}{i}$ with $i \geq h$ has exactly one predecessor. Otherwise, there will be infinitely many accepting branches, contradicting with the assumption that $\A$ is an FANBW. Assume that there are $m < \infty$ accepting $\omega$-branches in $G_{w, \A}$. Then we can set the separate level $k$ of $G_{w,\A}$ to $\max\setcond{h_{i}}{ 1 \leq i \leq m}$ where $h_{i}$ is the separate level index of $i$-th accepting $\omega$-branch. 
\end{proof}

For instance, the separate level is $2$ in the accepting DAG $G_{w, \A}$ of $\A$ over $b^{\omega}$ in Figure~\ref{fig:example}, as each vertex $\vertex{q_{1}}{i}$ with $i \geq 3$ only has the predecessor $\vertex{q_{1}}{i - 1}$.

It follows immediately from  Lemma~\ref{lem:vertices-one-pred} that for each vertex $v$ in $G_{w, \A}$ with more than one incoming edge, keeping only one of incoming edges of $v$ will not change whether $G_{w, \A}$ is accepting.
Assume that $\states = \setnocond{s_{1}, s_{2}, \cdots , s_{n}}$.
We define an \emph{edge-reduced} DAG $G^{e}_{w, \A} = \langle V, E^{e}\rangle$ called co-deterministic DAG,  in which each vertex only has at most one predecessor with the following policy for removing edges:
if there is a vertex with multiple incoming edges in $G_{w,\A}$, we only keep the incoming edge from the predecessor with the minimal index.
Formally, the definition of edges in $G^{e}_{w, \A}$ is given as follows. 
\begin{itemize}
\item Edges.
There is an edge from $\langle s_{k}, l \rangle$ to $\langle s', l'\rangle$ iff $l' = l + 1$ and $k = \min \setcond{p \in \irange{n}}{ s' \in \trans(s_{p}, a_{l + 1})}$. 
\end{itemize}
Lemma~\ref{lem:fanbw-run-acc} ensures that $G^{e}_{w,\A}$ is accepting if $G_{w,\A}$ is accepting.
\begin{restatable}[Acceptance of Co-deterministic DAGs]{lemma}{fanbwrunacc}
\label{lem:fanbw-run-acc}
Assume that $\A$ is an FANBW.
Let $G^{e}_{w, \A}$ be the co-deterministic DAG of $\A$ over a word $w \in \infwords$.
Then $w$ is accepted by $\A$ if and only if $G^{e}_{w, \A}$ is accepting.
\end{restatable}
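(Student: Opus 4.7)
The plan is to prove both directions by exploiting that $G^e_{w,\A}$ and $G_{w,\A}$ share the same vertex set and that $G^e_{w,\A}$ keeps a unique incoming edge at every non-initial vertex.

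For the easy direction, if $G^e_{w,\A}$ is accepting, then any accepting $\omega$-branch in it is also a branch of $G_{w,\A}$ (since $E^e \subseteq E$), and it visits $F$-vertices infinitely often there too. Hence $G_{w,\A}$ is accepting and so $w \in \lang{\A}$. This uses only the definition of $G^e_{w,\A}$ and does not require Lemma~\ref{lem:vertices-one-pred}.

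For the converse, suppose $w$ is accepted by $\A$. Then $G_{w,\A}$ has some accepting $\omega$-branch $\hat{\run} = \vertex{q_0}{0}\vertex{q_1}{1}\cdots$. By Lemma~\ref{lem:vertices-one-pred}, there is a separate level $k \geq 1$ such that for every $i \geq k$ the vertex $\wordletter{\hat{\run}}{i}$ has exactly one predecessor in $G_{w,\A}$, which must be $\wordletter{\hat{\run}}{i-1}$. Since $\wordletter{\hat{\run}}{i-1}$ is the \emph{only} predecessor in $G_{w,\A}$, it is trivially the predecessor of minimal index, so the edge from $\wordletter{\hat{\run}}{i-1}$ to $\wordletter{\hat{\run}}{i}$ is retained in $G^e_{w,\A}$ for every $i \geq k$. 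Thus the suffix $\wordletter{\hat{\run}}{k}\wordletter{\hat{\run}}{k+1}\cdots$ is a valid infinite path in $G^e_{w,\A}$.

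It remains to prepend a prefix in $G^e_{w,\A}$ from some initial vertex down to $\wordletter{\hat{\run}}{k}$. Here I use the key structural feature of $G^e_{w,\A}$: by construction, every vertex $\vertex{s'}{l+1}$ in $V$ has at least one predecessor in $G_{w,\A}$ (because $V_{l+1} = \trans(V_l, a_l)$), and the edge rule selects exactly one, namely the predecessor with the minimal state index. So in $G^e_{w,\A}$ every non-initial vertex has exactly one incoming edge. Following this unique incoming edge backwards from $\wordletter{\hat{\run}}{k}$ for $k$ steps yields a finite path $\vertex{s_{i_0}}{0}\cdots\vertex{s_{i_k}}{k} = \wordletter{\hat{\run}}{k}$ in $G^e_{w,\A}$ starting at some initial vertex. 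Concatenating this prefix with the tail of $\hat{\run}$ from level $k$ gives an $\omega$-branch $\hat{\run}'$ in $G^e_{w,\A}$ which agrees with $\hat{\run}$ on all levels $\geq k$. Since $\hat{\run}$ visits $\acc$-vertices infinitely often, only finitely many of these visits can occur at levels $<k$, so $\hat{\run}'$ also visits $\acc$-vertices infinitely often, making $G^e_{w,\A}$ accepting.

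The only substantive step is verifying that the single retained incoming edge for each vertex on the tail of $\hat{\run}$ is indeed the one coming from $\hat{\run}$ itself; this is precisely what the separate-level guarantee of Lemma~\ref{lem:vertices-one-pred} buys us, so there is no real obstacle beyond bookkeeping.
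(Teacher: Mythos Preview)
Your proof is correct and follows essentially the same approach as the paper's: use the separate level $k$ from Lemma~\ref{lem:vertices-one-pred} to argue that the tail of an accepting $\omega$-branch survives in $G^{e}_{w,\A}$, then note that this tail is reachable from an initial vertex in $G^{e}_{w,\A}$. Your version is simply more explicit than the paper's terse proof, in particular in spelling out the backward-tracing argument (each non-initial vertex of $G^{e}_{w,\A}$ has exactly one incoming edge) and in handling the easy direction via $E^{e}\subseteq E$ rather than dismissing it as trivial.
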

\begin{proof}
The proof is trivial when $G_{w, \A}$ is nonaccepting.
Assume that $G_{w, \A}$ is accepting.
Let $\hat{\run}$ be an accepting $\omega$-branch and $k$ the separate level defined in Lemma~\ref{lem:vertices-one-pred}.
According to Lemma~\ref{lem:vertices-one-pred}, the $\omega$-branch from $\wordletter{\hat{\run}}{k+1}$ must be accepting.
Moreover, $\wordletter{\hat{\run}}{k+1}$ is reachable from an initial vertex $\vertex{q}{0}$ with $q \in \inits$.
Then there must exist an accepting $\omega$-branch in $G^{e}_{w, \A}$ if $G_{w, \A}$ is accepting.
Thus we conclude that $w $ is accepted by $\A$ if and only if $G^{e}_{w, \A}$ is accepting.
\end{proof}

For instance, the co-deterministic DAG of $G_{w,\A}$ in Figure~\ref{fig:example} is still accepting after deleting the edge from $\vertex{q_{2}}{1}$ to $\vertex{q_{1}}{2}$, as denoted by the dashed arrow.

By removing redundant edges, we can now define a reduced transition function $\trans^e_{w,\ell}: 2^\states \times \Sigma \rightarrow 2^{\states}$ over the levels in $G^{e}_{w, \A}$.
\begin{definition}[Transition Function for Co-deterministic DAGs]
\label{def:edge-relation-e}
Given the set of states $S \subseteq \states$ at level $\ell$ of $G^{e}_{w, \A}$ and let $S' = \trans(S, \wordletter{w}{\ell})$ be the set of states at level $\ell + 1$.
Define $S_{min} = \setcond{q_{m} \in S}{ m \in \min \setcond{k \in \irange{n} }{ q' \in \trans(q_{k}, \wordletter{w}{\ell})}, q' \in S'}$ as the minimal set of predecessors of $S'$. Then, for a set of states $S_1 \subseteq S$, we define $\trans^{e}_{w,\ell}(S_{1}, \wordletter{w}{\ell}) = \trans(S_{1} \cap S_{min} , \wordletter{w}{\ell})$. We call $\trans^{e}_{w,\ell}$ the \emph{reduced transition function at level}~$\ell$ in $G^{e}_{w, \A}$. 
\end{definition}

\begin{example}
\label{example:trans-not-on-level}
Consider again $G_{b^{\omega}, \A}$ in Figure~\ref{fig:example} and let $S = \setnocond{q_{1}, q_{2}}$ at level~$1$: we have $S' = \trans(S, b) = \setnocond{q_{1}}$ and $S_{min} = \setnocond{q_{1}}$. Let $\trans^{e}_{b^{\omega}, 1}$ be the reduced transition function at level $1$ defined from $\trans$ in Definition \ref{def:edge-relation-e}. It follows that $\trans^{e}_{b^{\omega}, 1}(\setnocond{q_{1}}, b) = \trans(\setnocond{q_{1}}\cap S_{min}, b) = \setnocond{q_{1}}$ and $\trans^{e}_{b^{\omega}, 1}(\setnocond{q_{2}}, b) = \trans(\setnocond{q_{2}}\cap S_{min}, b) = \emptyset$.
\end{example}

In general, the reduced transition function $\trans^e_{w,\ell}$ may seem to depend on the level $\ell$ and the word $w$ yielding the edge connections between vertices at levels $\ell$ and $\ell+1$ in $G^e_{w, \A}$. We claim, however,  that in Definition \ref{def:edge-relation-e}, $\trans^{e}_{w,\ell}$ is not dependent on the level number $\ell$ and the word $w$, due to our specific choice of the set $S_{min}$. Thus, we can omit the level $\ell$ and $w$ in our notion $\trans^{e}$.

\begin{lemma}
\label{lem:independent-reduced-trans}
Let $S \subseteq \states$ and $b$ be the set of states and the input letter at the level $\ell_{1}$ in $G^e_{w_{1}, \A}$ and at the level $\ell_{2}$ in $G^e_{w_{2}, \A}$, respectively.
Then $\trans^e_{w_{1},\ell_{1}}$ of $G^e_{w_{1},\A}$ and $\trans^e_{w_{2},\ell_{2}}$ of $G^e_{w_{2},\A}$ are identical regardless of their different level numbers and infinite words.
\end{lemma}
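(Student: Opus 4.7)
The plan is to read off from Definition~\ref{def:edge-relation-e} that every object used to compute $\trans^e_{w,\ell}(S_1, b)$ depends only on $S$, $b$, $S_1$, and the global transition function $\trans$ of $\A$, and never on the level index $\ell$ nor on any letter of $w$ besides $b$ itself. Consequently the values produced in $G^e_{w_1,\A}$ at level $\ell_1$ and in $G^e_{w_2,\A}$ at level $\ell_2$ must coincide.

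Concretely, I would break the check into three small observations. First, $S' = \trans(S, b)$ is visibly a function only of $S$ and $b$, since $\trans$ is fixed with $\A$. Second, the auxiliary set $S_{min}$ is defined by quantifying over $q' \in S'$ and taking, for each such $q'$, the smallest index $k \in \irange{n}$ with $q' \in \trans(q_k, b)$; this minimum is computed in the fixed global enumeration $\states = \setnocond{s_1,\ldots,s_n}$ and uses only $\trans$, $b$, $S$ and $S'$. So $S_{min}$ is, once again, entirely determined by $S$ and $b$. Third, $\trans^e_{w,\ell}(S_1, b) = \trans(S_1 \cap S_{min}, b)$ is just the image under $\trans(\cdot, b)$ of a set built from $S_1$ and $S_{min}$, so it inherits the same independence from $\ell$ and $w$.

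Combining these three steps, the computations of $\trans^e_{w_1,\ell_1}(S_1, b)$ and $\trans^e_{w_2,\ell_2}(S_1, b)$ construct the same set $S'$, then the same set $S_{min}$, and finally return the same output. This legitimizes dropping the subscripts and writing simply $\trans^e$, exactly as the paper does in the remark immediately following the lemma.

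I do not anticipate any hard step; the statement is essentially a bookkeeping remark about which variables appear in Definition~\ref{def:edge-relation-e}. The only subtlety I would be careful to spell out is that the ``$\min$'' in the definition of $S_{min}$ ranges over the global state indices $\irange{n}$ of $\states$, and not over the (level-dependent) indices of states currently in $S$; once this scope is made explicit, independence from $\ell$ and from the other letters of $w$ is immediate by inspection.
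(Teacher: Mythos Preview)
Your proposal is correct and follows the same approach as the paper's proof, which simply observes that by Definition~\ref{def:edge-relation-e} all computations depend only on $S$ and $b$; you just unfold this observation in more detail, explicitly tracking the dependence through $S'$, $S_{min}$, and the final output. Your remark that the $\min$ in $S_{min}$ ranges over the global index set $\irange{n}$ rather than over the states present at the current level is a helpful clarification that the paper leaves implicit.
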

\begin{proof}
According to Definition \ref{def:edge-relation-e}, we can let $\wordletter{w_{1}}{\ell_{1}} = \wordletter{w_{2}}{\ell_{2}} = b$. Then all the subsequent computations defined for both $\trans^e_{w_{1}, \ell_{1}}$ and $\trans^e_{w_{2}, \ell_{2}}$ only depend on the set of states $S$ and the input letter $b$, not their level numbers and the entire infinite words.
Thus we complete the proof.
\end{proof}

Because of Lemma \ref{lem:independent-reduced-trans}, we can just use the reduced transition function $\trans^e$ with respect to the set of states $S$ and the input letter $b$ at a level in the construction of complementary NBWs of FANBWs (see Definitions \ref{def:complement-rank-kv-fanbws} and \ref{def:unbw-complement}). We remark that one can define different co-deterministic DAGs from those constructed in this work.
This is illustrated in the following example.
\begin{example}[$\trans^e_{w,\ell}$ depending on $\ell$]
\label{example:trans-on-level}
Consider $G_{a^{\omega}, \A}$ in Figure~\ref{fig:num-of-omega-branches} and let $S_{\ell} = \setnocond{q_{0}, q_{1}, q_{2}}$ at level~$\ell \geq 2$:
we have $S_{\ell} = \trans(S_{\ell}, a)$ as the set of states on each level $\ell \geq 2$.
Rather than keeping the predecessor with the minimal index of a state in $S_{min}$ (see Definition \ref{def:edge-relation-e}), one can define $S_{\ell, min}$ as $S_{min}$ depending on the level $\ell$ as follows.
We define $S_{\ell, min} = \setnocond{q_{0}, q_{1}}$ when $\ell$ is an odd number and $S_{\ell, min} = \setnocond{q_{0}, q_{2}}$ otherwise.
That is, we keep the predecessor $q_{1}$ of $q_{2}$ at odd levels and $q_{2}$ at even levels.
Let $\trans^{e}_{a^{\omega}, \ell}$ be the reduced transition function at level $\ell$. It follows that $\trans^{e}_{a^{\omega}, \ell}(\setnocond{q_{1}}, a) = \trans(\setnocond{q_{1}}\cap S_{\ell, min}, a) = \setnocond{q_{2}}$ when $\ell$ is odd and $\trans^{e}_{a^{\omega}, \ell}(\setnocond{q_{1}}, a) = \emptyset$ otherwise.
Clearly, the definition of $\trans^e_{a^{\omega}, \ell}$ is dependent on the level $\ell$ and the resulting co-deterministic DAG is different from the one depicted in Figure~\ref{fig:num-of-omega-branches} where dashed arrows denote the removed edges.  
\end{example}

In the remainder of the paper, we may write $\trans^{e}(q, b)$ instead of $\trans^{e}(\setnocond{q}, b)$ for an input singleton set $\setnocond{q}$.
The transition function $\trans^{e}$ will be used in the complementation of FANBWs since the complementation essentially constructs DAGs and then identifies accepting DAGs.

One can verify that each vertex in the co-deterministic DAG $G^{e}_{w, \A}$ of $\A$ over $w$ has at most one predecessor. It follows that the number of $\omega$-branches in a non-accepting/accepting $G^{e}_{w,\A}$ is at most $\size{\states}$, as stated in Lemma~\ref{lem:unbw-rdag-num-stable}.
\begin{restatable}[Finite Number of $\omega$-Branches in Co-deterministic DAGs]{lemma}{finiteNumOmegaBranches}
\label{lem:unbw-rdag-num-stable}
Let $G^{e}_{w, \A}$ be a co-deterministic DAG of $\A$ over $w$.
Then the number of $\omega$-branches in $G^{e}_{w, \A}$ is at most $\size{\states}$.
\end{restatable}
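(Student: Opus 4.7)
The plan is to exploit the at-most-one-predecessor property of co-deterministic DAGs to show that any two distinct $\omega$-branches eventually become permanently disjoint, and then bound the total number of $\omega$-branches via the trivial bound $\size{V_\ell} \le \size{\states}$ on the width of each level in the DAG.

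The crucial structural step is to prove that in $G^{e}_{w,\A}$, two distinct $\omega$-branches that diverge at some level never share a vertex at any strictly later level. Suppose, toward a contradiction, that two $\omega$-branches $\gamma_1$ and $\gamma_2$ first differ at some level $\ell_{\mathrm{div}}$ and then re-meet somewhere above; let $\ell$ be the least level above $\ell_{\mathrm{div}}$ at which $\wordletter{\gamma_1}{\ell} = \wordletter{\gamma_2}{\ell}$. Minimality of $\ell$ forces $\wordletter{\gamma_1}{\ell-1} \neq \wordletter{\gamma_2}{\ell-1}$, so the shared vertex at level $\ell$ would have two distinct predecessors, contradicting the defining property of $G^{e}_{w,\A}$.

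With this \emph{no re-merging} property in hand, I would assume for contradiction the existence of $k > \size{\states}$ distinct $\omega$-branches $\gamma_1,\dots,\gamma_k$. For every pair $i \neq j$, let $\ell_{i,j}$ be the first level at which $\gamma_i$ and $\gamma_j$ differ, which is well-defined since the branches are distinct. These finitely many indices have a maximum $L$, and by the previous step the vertices $\wordletter{\gamma_1}{L},\dots,\wordletter{\gamma_k}{L}$ are pairwise distinct. Hence level $L$ of $G^{e}_{w,\A}$ contains at least $k > \size{\states}$ vertices, contradicting $\size{V_L} \le \size{\states}$.

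I expect the main subtlety to sit in the \emph{no re-merging} step: one must take the least level at which a hypothetical re-meeting occurs, rather than working from the divergence level itself, so that the two predecessors at that level are guaranteed to be distinct and the co-determinism hypothesis actually bites. Once that sub-lemma is established, the global count is a direct pigeonhole argument and requires no further machinery from the paper.
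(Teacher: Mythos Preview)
Your proof is correct. Both your argument and the paper's rest on the same structural fact---co-determinism prevents two $\omega$-branches from re-merging once they have diverged---but they package it differently. The paper introduces the count $m_i$ of vertices at level $i$ lying on some $\omega$-branch, observes that co-determinism forces $m_0 \le m_1 \le m_2 \le \cdots$, and bounds this monotone sequence by $\size{\states}$; the final step from $m_i \le \size{\states}$ for all $i$ to ``at most $\size{\states}$ $\omega$-branches'' is left implicit. Your decomposition makes precisely that implicit step the centrepiece: you isolate no-re-merging as an explicit sub-lemma and then finish with a direct pigeonhole argument at a single well-chosen level $L$. Your route is a little more explicit and self-contained; the paper's route is terser but asks the reader to supply essentially your sub-lemma to see why the bound on $m_i$ actually caps the number of $\omega$-branches.
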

\begin{proof}
Let $m_{i}$ with $i \geq 0$ be the number of vertices which are in the $\omega$-branches (not in all branches) on level $i$.
For instance, $m_{i} = 1$ for each $i \geq 1$ in Fig.~\ref{fig:example} while the number of vertices on level $1$ is $2$.
Since each vertex in $G^{e}_{w, \A}$ has only one predecessor, we have that $m_{0} \leq m_{1} \leq m_{2} \leq \cdots$, i.e., the number of vertices in $\omega$-branches on each level does not decrease over the levels.
In addition, there are at most $\size{\states}$ states on each level.
Thus there are at most $\size{\states}$ $\omega$-branches since we have $m_{i} \leq \size{\states}$ for each $i \geq 0$.
\end{proof}

\begin{figure}
\begin{center}
\includegraphics[scale=1]{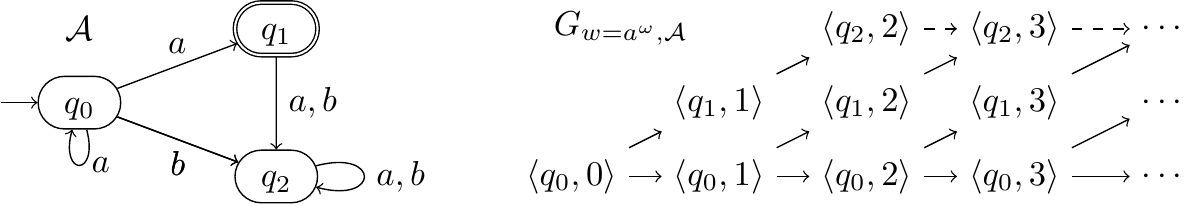}
\end{center}
\caption{Another FANBW $\A$ with $\inits = \setnocond{q_{0}}$ and $\acc = \setnocond{q_{1}}$ and the run DAG $G_{w, \A}$ over $a^{\omega}$}. 
\label{fig:num-of-omega-branches}
\vspace{-0.25cm}
\end{figure}

Consider the DAG $G_{w,\A}$ in Figure~\ref{fig:num-of-omega-branches}:
one can verify that there are infinitely many $\omega$-branches in the non-reduced DAG $G_{w,\A}$ over $a^{\omega}$;
while for the co-deterministic DAG of $G_{w,\A}$ where removed edges are marked with dashed arrows, there is only one $\omega$-branch $\vertex{q_{0}}{0}\vertex{q_{0}}{1} \cdots \vertex{q_{0}}{l} \cdots$.

After redundant edges have been cut off, we obtain a DAG $G^{e}_{w,\A}$ with a finite number of $\omega$-branches.
Thus if $w \notin \lang{\A}$, there must exist a maximum level $l > 0$ among those $\omega$-branches such that each $\acc$-vertex $\vertex{q}{l'}$ with $l' \geq l$ is finite, which can be used for identifying whether $G^{e}_{w,\A}$ is accepting in the complementation of FANBWs.
We call a level $l > 0$ a \emph{stable level} in $G^e_{w,\A}$ if each $\acc$-vertex $\vertex{q}{l'}$ with $l' \geq l$ in $G^e_{w,\A}$ is finite.

\begin{restatable}[Stable Level in Nonaccepting Co-deterministic DAGs]{lemma}{stableLevelCoDetDAG}
\label{lem:unbw-rdag-nonacc-branch-stable}
Assume that $\A$ is an FANBW and $w \in \infwords$.
Let $G^{e}_{w, \A}$ be the co-deterministic DAG of $\A$ over $w$.
Then $w \notin \lang{\A}$ if and only if there exists a stable level $k > 0$ in $G^{e}_{w,\A}$.
\end{restatable}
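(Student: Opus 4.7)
My plan is to prove both directions by combining the previous two lemmas of the section: Lemma on acceptance of co-deterministic DAGs (which says $G^e_{w,\A}$ is accepting iff $w\in\lang\A$) and Lemma on finite number of $\omega$-branches (which bounds the number of $\omega$-branches of $G^e_{w,\A}$ by $\size{\states}$). A key structural fact I will use throughout is that in $G^e_{w,\A}$ every non-initial vertex has exactly one predecessor, so every vertex determines a unique backward path to some initial vertex; hence an infinite forward continuation of any vertex $\vertex{q}{l'}$ glues with its unique backward prefix to produce a genuine $\omega$-branch of $G^e_{w,\A}$ through $\vertex{q}{l'}$.

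For the forward direction ($\Rightarrow$), I assume $w\notin\lang\A$, so by the acceptance lemma $G^e_{w,\A}$ has no accepting $\omega$-branch. By the finiteness lemma I can enumerate all $\omega$-branches as $\beta_1,\dots,\beta_m$ with $m\le\size{\states}$. Since none of them is accepting, each $\beta_i$ visits $\acc$-vertices only finitely often, so I can set $k_i$ to be the largest level at which $\beta_i$ meets an $\acc$-vertex (or $0$ if it never does). I then take $k=1+\max_i k_i$ and claim $k$ is stable. Indeed, for any $\acc$-vertex $\vertex{q}{l'}$ with $l'\ge k$, if it were not finite then an infinite forward continuation, combined with the unique backward path to an initial vertex, would yield some $\omega$-branch $\beta_i$ meeting $\acc$ at level $l'\ge k>k_i$, contradicting the maximality of $k_i$.

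For the backward direction ($\Leftarrow$), I assume there is a stable level $k>0$ and show $w\notin\lang\A$ by contrapositive through the acceptance lemma: if $G^e_{w,\A}$ were accepting, some $\omega$-branch $\beta$ would meet $\acc$ infinitely often, so it would contain an $\acc$-vertex $\vertex{q}{l'}$ with $l'\ge k$ that has an infinite continuation (the tail of $\beta$). That $\acc$-vertex is therefore not finite, contradicting the stability of $k$. Hence $G^e_{w,\A}$ is non-accepting, and the acceptance lemma gives $w\notin\lang\A$.

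The only delicate point, and the main thing I want to be careful about, is the justification that an $\omega$-branch \emph{starting from} a vertex $\vertex{q}{l'}$ (i.e.\ an infinite fragment from $\vertex{q}{l'}$) can always be extended into a full $\omega$-branch from an initial vertex, so that the argument really lands inside the set $\{\beta_1,\dots,\beta_m\}$ enumerated in the forward direction. This is where the co-determinism of $G^e_{w,\A}$ (each non-initial vertex has a unique predecessor) is essential, and I will state this reachability-to-initial observation explicitly before invoking it, since without it the bookkeeping of ``last $\acc$-level'' over full branches would not control $\acc$-vertices that only lie on fragments.
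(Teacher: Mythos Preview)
Your proposal is correct and follows essentially the same approach as the paper's own proof: both directions invoke Lemma~\ref{lem:fanbw-run-acc} (acceptance of co-deterministic DAGs) and Lemma~\ref{lem:unbw-rdag-num-stable} (finitely many $\omega$-branches), and in the forward direction you take $k$ as (one plus) the maximum over the finitely many branches of the last $\acc$-level, exactly as the paper does. Your explicit justification that a non-finite $\acc$-vertex at level $\geq k$ extends backward via the unique-predecessor property to a full $\omega$-branch is a point the paper leaves implicit, so your version is in fact slightly more careful.
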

\begin{proof}
($\Leftarrow$)
By Lemma \ref{lem:fanbw-run-acc}, if $w \in \lang{\A}$ and $\A$ is an FANBW, there exists an accepting $\omega$-branch in $G^{e}_{w, \A}$.
It follows that if $w \in \lang{\A}$, there does not exist a stable level $k$ in $G^{e}_{w, \A}$ such that each $\acc$-vertex after $k$ is finite.
Consequently, if there exists a stable level $k$ in $G^{e}_{w, \A}$, it holds that $w \notin \lang{\A}$.

($\Rightarrow$)
By Lemma~\ref{lem:unbw-rdag-num-stable}, let $m \leq \size{\states}$ be the number of $\omega$-branches in $G^{e}_{w, \A}$.
Since $w \notin \lang{\A}$, all the $\omega$-branches in $G^{e}_{w,\A}$ is nonaccepting.
Therefore, for the $i$-th $\omega$-branch $\hat{\run}_{i}$, there is a vertex $\vertex{q}{k_{i}}$ such that every vertex of $\hat{\run}_{i}$ reachable from $\vertex{q}{k_{i}}$ is not an $\acc$-vertex.  
It follows that we can set $k = \max\setcond{k_{i}}{ i \in \irange{m}}$ and thus all the $\acc$-vertices on a level after $l \geq k$ are finite and not on $\omega$-branches.
\end{proof}

Consider again the DAG $G_{w,\A}$ in Figure~\ref{fig:num-of-omega-branches}:
there does not exist a stable level in the non-reduced DAG $G_{w,\A}$ since each $\acc$-vertex $\vertex{q_{1}}{l}$ with $l \geq 1$ is not finite;
while in the co-deterministic DAG of $\A$ over $a^{\omega}$, one can verify that the stable level $k$ is $1$.

\section{Rank-Based Complementation}
\label{sec:rank-based}

We first introduce in Subsection~\ref{ssec:rank-algo} the rank-based complementation  (\rkc) proposed in \cite{kupferman2001weak}, which constructs a complementary NBW $\A^c$ for $\A$ with $2^{\bigO(n \log n)}$ states.
Then in Subsection~\ref{ssec:improved-rank-algo}, we show that if $\A$ is an FANBW, \rkc based on the construction of co-deterministic DAGs produces a complementary NBW $\A^c$ with $2^{\bigO(n)}$ states.

\subsection{Rank-Based Algorithm for NBWs}
\label{ssec:rank-algo}

\rkc was introduced by Kupferman and Vardi in \cite{kupferman2001weak} to construct a complementary NBW $\A^c$ of $\A$ by identifying the DAGs of $\A$ over nonaccepting words $ w\notin\inflang{\A}$. Intuitively, given a word $w \notin \inflang{\A}$,  all $\omega$-branches of the DAG of $\A$ over $w$ will eventually stop visiting $\acc$-vertices. Based on this observation, in order to identify the nonaccepting DAG of $\A$ over $w$, they introduced the notion of \emph{level rankings} of $G_{w, \A}$. By assigning only even ranks to $\acc$-vertices, they showed that there exists a unique ranking function that assigns ranks in $\range{2n}$ to the vertices of $G_{w, \A}$ such that $w \notin \lang{\A}$ iff all $\omega$-branches of $G_{w, \A}$ eventually get trapped in odd ranks.

We now define level rankings of a nonaccepting DAG. The level ranking of $G_{w, \A}= (V, E)$ defines a ranking function $f : V \rightarrow \range{2n}$ that satisfies the following conditions:
\begin{itemize}
\item[(i)] for each vertex $\vertex{q}{i} \in V $ if $f(\vertex{q}{i}) \in \oddsetn{2n}$, then $q \notin \acc$,
\item[(ii)] for each edge $(\vertex{q}{i}, \vertex{q'}{i+1}) \in E$, 
$f(\vertex{q'}{i+1}) \leq f(\vertex{q}{i}) $
\end{itemize}
The ranks along a branch decrease monotonically and $\acc$-vertices get only even ranks.

We now define a specific ranking function $f$ of $G_{w, \A}$ for a given word $w \notin \inflang{\A}$. We define a sequence of DAGs $G^{0}_{w, \A} \supseteq G^{1}_{w, \A} \supseteq \cdots$, where $G^{0}_{w, \A} = G_{w, \A}$, as follows.
For each $i \geq 0$, 
\begin{itemize}
\item $G^{2i + 1}_{w, \A}$ is the DAG constructed from $G^{2i}_{w, \A}$ by removing all finite vertices in $G^{2i}_{w, \A}$ and the edges associated with them, and
\item if $G^{2i + 1}_{w, \A}$ has at least one $\acc$-free vertex, then $ G^{2i+2}_{w, \A}$ is the DAG constructed from $G^{2i+1}_{w, \A}$ by removing all the $\acc$-free vertices in $G^{2i+1}_{w, \A}$ and the edges associated with them.
\end{itemize}

Recall that $\acc$-free vertices cannot reach $\acc$-vertices.
It was shown in~\cite{kupferman2001weak} that $G^{2n+1}_{w,\A}$ is empty and each vertex $\vertex{q}{l}$ is either finite in $G^{2i}_{w,\A}$ or $\acc$-free in $G^{2i+1}_{w,\A}$.
Thus the sequence of DAGs generated from the definition above defines a unique ranking function $f$ over the set of vertices in $G_{w, \A}$ inductively as follows.
For every $i \geq 0$,
\begin{itemize}
\item[(1)] $f(\vertex{q}{l}) = 2i$ for each vertex $\vertex{q}{l}$ that is finite in $G^{2i}_{w, \A}$, if exists.
\item[(2)] $f(\vertex{q}{l}) = 2i + 1$ for each $\acc$-free vertex $\vertex{q}{l}$ in $G^{2i+1}_{w, \A}$, if exists.
\end{itemize}
Consequently, we have Lemma~\ref{lem:nonacc-run-odd-ranks} for identifying nonaccepting DAGs according to~\cite{kupferman2001weak}.

\begin{lemma}[Nonaccepting DAGs~\cite{kupferman2001weak}]
\label{lem:nonacc-run-odd-ranks}
$\A$ rejects a word $w$ if and only if the unique ranking function $f$ defined in (1) and (2) above has $2n$ as maximum rank, and all $\omega$-branches of $G_{w,\A}$ eventually get trapped in odd ranks.
\end{lemma}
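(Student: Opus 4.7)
The plan is to establish the biconditional by treating each direction separately, relying on the cited structural property: when $w \notin \lang{\A}$, the sequence $G^{0}_{w,\A} \supseteq G^{1}_{w,\A} \supseteq \cdots$ terminates with $G^{2n+1}_{w,\A} = \emptyset$, so every vertex of $G_{w,\A}$ is eventually removed and therefore receives a rank in $\range{2n}$. Before handling the two directions I would verify that $f$ genuinely satisfies conditions (i) and (ii), since this fact is tacitly used throughout.

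Condition (i) is almost immediate: if $f(\vertex{q}{l}) = 2i+1$, then $\vertex{q}{l}$ is $\acc$-free in $G^{2i+1}_{w,\A}$, and since $\vertex{q}{l}$ is trivially reachable from itself, it cannot be an $\acc$-vertex, so $q \notin \acc$. For condition (ii), given an edge $(\vertex{q}{l}, \vertex{q'}{l+1})$, I would set $j = f(\vertex{q}{l})$ and split on the parity of $j$. If $j = 2i$, then $\vertex{q}{l}$ is finite in $G^{2i}_{w,\A}$, so any $\omega$-branch starting at $\vertex{q'}{l+1}$ (if it is still present there) could be prepended via the edge to an $\omega$-branch at $\vertex{q}{l}$, forcing $\vertex{q'}{l+1}$ to be finite in $G^{2i}_{w,\A}$ as well; and if $\vertex{q'}{l+1}$ was already removed earlier, then $f(\vertex{q'}{l+1}) < 2i$. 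Either way $f(\vertex{q'}{l+1}) \leq 2i$. If $j = 2i+1$, a symmetric argument applies: an $\acc$-vertex reachable from $\vertex{q'}{l+1}$ in $G^{2i+1}_{w,\A}$ would also be reachable from $\vertex{q}{l}$, contradicting $\acc$-freeness, so $f(\vertex{q'}{l+1}) \leq 2i+1$.

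For the forward direction ($\Rightarrow$), assume $w \notin \lang{\A}$. Since $f$ is total with range $\range{2n}$, and by (ii) the sequence $f(\wordletter{\hat{\run}}{0}), f(\wordletter{\hat{\run}}{1}), \ldots$ is weakly decreasing for every $\omega$-branch $\hat{\run}$, it must stabilize at some value $r$. If $r = 2i$ were even, then from some level onward the vertices of $\hat{\run}$ would all be finite in $G^{2i}_{w,\A}$; but a finite vertex admits no forward $\omega$-branch, contradicting that $\hat{\run}$ is infinite. Hence $r$ is odd. For the reverse direction ($\Leftarrow$), suppose $f$ has maximum rank $2n$ and every $\omega$-branch stabilizes at an odd rank. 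Fix any $\omega$-branch $\hat{\run}$; from its stabilization level onward each vertex carries an odd rank, so by (i) none of these vertices lies in $\acc$, meaning $\hat{\run}$ visits $\acc$ only finitely often and is therefore nonaccepting. Since this holds uniformly for all $\omega$-branches, no accepting $\omega$-branch exists and $w \notin \lang{\A}$.

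The main obstacle is the verification of the monotonicity condition (ii): the case split on the parity of $f(\vertex{q}{l})$, together with the subcase on whether $\vertex{q'}{l+1}$ still survives in the subgraph $G^{j}_{w,\A}$, must be performed carefully to rule out any scenario in which a successor could acquire a strictly larger rank. Once monotonicity is in place, the stabilization arguments in both directions are essentially König-type observations, exploiting that ranks live in the finite interval $\range{2n}$ and decrease weakly along every edge of $G_{w,\A}$.
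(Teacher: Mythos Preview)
The paper does not supply its own proof of this lemma; it is stated as a citation of the result from~\cite{kupferman2001weak}, so there is no in-paper argument to compare against. Your proposal reconstructs the standard Kupferman--Vardi argument correctly: verifying that the canonical $f$ satisfies the ranking conditions (i) and (ii), then using monotonicity along edges to force stabilization of ranks on every $\omega$-branch, and ruling out even stable ranks via the definition of ``finite'' vertex.

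One small caution on condition~(i): in this paper's definitions a path has length at least~$1$, so a vertex is \emph{not} literally ``reachable from itself''. Your appeal to self-reachability is therefore not quite licensed by the stated definitions. The fix is easy: if $\vertex{q}{l}$ is $\acc$-free in $G^{2i+1}_{w,\A}$ and were an $\acc$-vertex, then (since $\vertex{q}{l}$ is not finite) it has a successor $\vertex{q'}{l+1}$ in $G^{2i+1}_{w,\A}$, and that successor would reach the $\acc$-vertex $\vertex{q}{l}$'s own successor-level $\acc$-vertices---or more directly, observe that the intended reading of $\acc$-freeness (as in the original source) excludes the vertex itself from being an $\acc$-vertex. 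Either way the conclusion $q \notin \acc$ stands; just avoid invoking self-reachability under this paper's conventions.
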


We have constructed a unique ranking function above for identifying nonaccepting DAGs.
To construct the complementary NBW $\A^c$ with such a ranking function, we have to guess the ranking level by level.
Since the maximum rank is $2n$, along an input word $w$, we can encode a ranking function for $G_{w,\A}$ by utilizing a \emph{level-ranking} function $f: \states \rightarrow \range{2n} \cup \setnocond{\bot}$ for the states $S$ at a level in the DAG $G_{w, \A}$ such that if $q \in S \cap \acc$, then $f(q)$ is even, and $f(q) = \bot$ if $q \in \states \setminus S$. 
\begin{definition}[Coverage Relation for Level Rankings]
\label{def:coverage-level-rnk}
Let $a $ be a letter in $\alphabet$ and $f, f'$ be two level ranking functions.
We say $f$ \emph{covers} $f'$ \emph{under} letter $a$, denoted by $f' \cover{\trans}{a} f$, when for all $q, q' \in \states$, if $f(q) \geq 0$ and $q' \in \trans(q, a)$, then $0 \leq f'(q') \leq f(q)$, otherwise $f'(q') = \bot$.
\end{definition}
Note here that $\cover{\trans}{a}$ is defined based on the transition function $\trans$.
The coverage relation indicates that the level rankings $f$ and $f'$ of two consecutive levels of $G_{w, \A}$ do not increase in ranks. We denote by $\R$ the set of all possible level ranking functions.

In order to verify that the guess about the ranking of $G_{w, \A}$ is correct, \rkc uses the \emph{breakpoint construction} proposed in \cite{miyano1984alternating}. This construction employs a set of states $O \subseteq \states$ to check that the vertices assigned with even ranks are finite.
Similarly to Lemma~\ref{lem:unbw-rdag-nonacc-branch-stable}, the nonaccepting DAG $G_{w, \A}$ with the ranking function defined in (1) and (2) eventually reaches a stable level, after which all $\acc$-vertices are finite. Hence, a breakpoint construction suffices to verify such guesses.

The formal definition of the complementary NBW $\A^c$ of the input NBW $\A$ is given below.
\begin{definition}[\hspace*{-0.1cm}\cite{kupferman2001weak}]
\label{def:complement-rank-kv}
Let $\A = (\states, \inits, \trans, \acc)$ be an NBW.
We then define an NBW $\A^c = (\states^c, \inits^c, \trans^c, \acc^c)$ of $\A$ as follows.
\begin{itemize}
\item $\states^c \subseteq \R \times 2^\states$,
\item $\inits^c = (f, \emptyset)$ where $f(q) = 2n$ if $q \in \inits$ and $f(q) = \bot$ otherwise.
\item $\trans^c$ is defined as follows:
\begin{enumerate}

\item if $O \neq \emptyset$, then $\trans^c((f, O), a) = \setcond{ (f', \trans(O, a) \setminus \odd{f'})}{ f' \cover{\trans}{a} f } $ (intuition: breakpoint $O$ only tracks vertices assigned with even ranks), 
\item if $O = \emptyset$, then $\trans^c((f, O), a) =  \setcond{ (f', \even{f'} ) }{ f' \cover{\trans}{a} f } $ (intuition: $O = \emptyset$ means all previous $\acc$-vertices with even ranks are finite, then verify new vertices with even ranks).
\end{enumerate}

\item $\acc^c= \setcond{(f, O) \in \states^c}{O = \emptyset} $.
\end{itemize}
where  $\odd{f} = \setcond{ q\in\states }{ f(q) \text{ is odd }}$ and $\even{f} = \setcond{ q \in\states}{ f(q) \text{ is even }}$.
\end{definition}
Let $w$ be a word. Intuitively, every state $(f, O)$ in $\A^c$ corresponds to a level of the DAG $G_{w, \A}$ over $w$. If $w$ is accepted by $\A^c$, i.e., $O$ becomes empty for infinitely many times, then we conclude that all the $\omega$-branches of $G_{w, \A}$ eventually get trapped in odd ranks. It follows that no branches are accepting in $G_{w, \A}$, i.e., $w \notin \inflang{\A}$. The other direction is also easy to prove and omitted here. Thus we conclude that $\inflang{\A^c} = \infwords \setminus \inflang{\A}$. Since $f \in \R$ is a function from $\states$ to $\range{2n} \cup \setnocond{\bot}$, the number of possible $f$ functions is $(2n+2)^n \in 2^{\bigO(n \log n)}$. Therefore, the number of states in $\A$ is in $2^n \times 2^{\bigO(n \log n)} \in 2^{\bigO(n \log n)}$.

\begin{lemma}[The Language and Size of $\A^c$~\cite{kupferman2001weak}]
\label{lem:correctness-rank-kv}
Let $\A$ be an NBW with $n$ states and $\A^c$ the NBW defined in Definition~\ref{def:complement-rank-kv}. Then $\inflang{\A^c} = \infwords \setminus \inflang{\A}$ and $\A^c$ has $2^{\bigO(n \log n)}$ states.
\end{lemma}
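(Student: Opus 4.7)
The plan is to establish the two claims separately: language equivalence first, then the state count. For language equivalence, I would prove each inclusion by connecting runs of $\A^c$ on $w$ to level rankings of $G_{w,\A}$ and then invoking Lemma~\ref{lem:nonacc-run-odd-ranks}. The state count is a direct counting argument from the definition of $\states^c$.

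For the inclusion $\lang{\A^c} \subseteq \infwords \setminus \lang{\A}$, I would fix an accepting run $(f_0,O_0)(f_1,O_1)\cdots$ of $\A^c$ over $w$. By definition of $\trans^c$, the relation $f_{i+1} \cover{\trans}{\wordletter{w}{i}} f_i$ holds for every $i$, and $f_0$ assigns rank $2n$ exactly to the initial states. Extending this pointwise to vertices of $G_{w,\A}$ gives a ranking function in the sense of conditions (i) and (ii) that precede Definition~\ref{def:coverage-level-rnk}: ranks are non-increasing along edges, and $\acc$-vertices receive even ranks because the breakpoint rules exclude them from states with odd ranks. Since the run is accepting, $O_i = \emptyset$ for infinitely many $i$; the breakpoint construction then forces every $\omega$-branch of $G_{w,\A}$ to eventually be trapped in odd ranks (an even-ranked vertex on a branch survives indefinitely in $O$ until it either disappears from $\A$'s DAG or its rank drops). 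By Lemma~\ref{lem:nonacc-run-odd-ranks}, this yields $w \notin \lang{\A}$.

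For the reverse inclusion, I would assume $w \notin \lang{\A}$ and use the canonical ranking function $f$ on $G_{w,\A}$ produced by the inductive $G^{2i}_{w,\A} \supseteq G^{2i+1}_{w,\A}$ construction. Define $f_i : \states \to \range{2n} \cup \setnocond{\bot}$ by $f_i(q) = f(\vertex{q}{i})$ for vertices present at level $i$ and $\bot$ otherwise. The monotonicity of $f$ along edges yields $f_{i+1} \cover{\trans}{\wordletter{w}{i}} f_i$, so the sequence $(f_i,O_i)$ determined by the breakpoint rules of Definition~\ref{def:complement-rank-kv} is a legal run of $\A^c$. It remains to argue that $O_i$ empties infinitely often: by Lemma~\ref{lem:nonacc-run-odd-ranks} all $\omega$-branches eventually stabilize in odd ranks, so each even rank class is populated by only finitely many vertices along any branch; the breakpoint set $O_i$, which tracks the descendants of the most recently refreshed even-ranked vertices, is therefore emptied infinitely often. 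This gives an accepting run of $\A^c$, so $w \in \lang{\A^c}$.

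For the state count, the set $\R$ of level rankings consists of functions $\states \to \range{2n}\cup\setnocond{\bot}$, so $\size{\R} \leq (2n+2)^n$. Combined with the $2^n$ choices for the breakpoint set $O \subseteq \states$, we obtain $\size{\states^c} \leq (2n+2)^n \cdot 2^n = 2^{\bigO(n\log n)}$, completing the lemma.

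The main obstacle is the completeness direction: one must be careful to show that the breakpoint $O_i$ empties infinitely often, which depends on the fact that along any fixed branch each even-rank stratum is eventually abandoned (by construction of the canonical ranking), so the pool of even-ranked descendants repeatedly collapses. All other steps are bookkeeping against Definition~\ref{def:complement-rank-kv} and Lemma~\ref{lem:nonacc-run-odd-ranks}.
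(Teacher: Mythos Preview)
Your proposal is correct and follows the standard Kupferman--Vardi argument, which is precisely what the paper sketches in the paragraph preceding the lemma; the paper itself gives no formal proof, citing \cite{kupferman2001weak}, and explicitly omits the completeness direction (``The other direction is also easy to prove and omitted here''), while your write-up supplies it via the canonical ranking and breakpoint analysis.

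Two small inaccuracies to clean up: (i) the reason $\acc$-vertices receive even ranks is that this constraint is part of the definition of the level-ranking set $\R$ (see the sentence before Definition~\ref{def:coverage-level-rnk}), not a consequence of the breakpoint rules on $O$; (ii) for the soundness direction, Lemma~\ref{lem:nonacc-run-odd-ranks} as stated concerns only the canonical ranking produced by the $G^{2i}_{w,\A}\supseteq G^{2i+1}_{w,\A}$ construction, so it is not the right citation for an arbitrary ranking extracted from a run of $\A^c$---just argue directly that an $\omega$-branch eventually trapped in odd ranks visits only finitely many $\acc$-vertices (since $\acc$-vertices carry even ranks), hence is nonaccepting. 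Neither point affects the overall structure, which matches the paper's sketch and the cited source.
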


\paragraph*{Relation to Construction of Co-deterministic DAGs.} 
Assume that we have two level-rankings $f' \cover{\trans}{a} f$.
A state $q'$ in the second level can have multiple $a$-predecessors defined in the domain of $f$. Then $f'(q') \leq \min \setcond{f(q)}{f(q) \neq \bot, q' \in \trans(q, a)}$. Thus we can define a co-deterministic DAG out of $G_{w,\A}$ where each vertex only keeps one predecessor with the minimal rank in the reduced DAG, in contrast to the predecessor with minimal index in Section~\ref{sec:reduced-run-dag}. There may, however, be multiple predecessors with the minimal rank.
Consequently, the non-reduced DAG $G_{w, \A}$ can be mapped to multiple co-deterministic DAGs depending on which ranking function is defined on $G_{w, \A}$ and how predecessors are chosen.
Note here that not every resulting co-deterministic DAG of $G_{w, \A}$ described above will be accepting if $G_{w,\A}$ is accepting, since each time the edges in accepting $\omega$-branches may be deleted. Thus these co-deterministic DAGs cannot be directly applied in \rkc for general NBWs.

\subsection{Rank-Based Algorithm for FANBWs}
\label{ssec:improved-rank-algo}

In the following, we show in Lemma~\ref{lem:unbw-rdag-nonacc-branch-max-rank} that if $\A$ is an FANBW, the maximum rank of the vertices in a co-deterministic DAG of $\A$ is at most $2$. It follows that the range of $f \in \R$ is $\setnocond{0, 1, 2} \cup \setnocond{\bot}$. We thus only need the maximum rank to be $2$ rather than $2n$ for the
co-deterministic DAG $G^e_{w, \A}$ of $\A$. Therefore, the number of states in $\A^c$ is in $2^n \times 4^n \in 2^{\bigO(n)}$ when the maximum rank is $2$.

\begin{lemma}[Maximum Rank of Co-deterministic DAGs]
\label{lem:unbw-rdag-nonacc-branch-max-rank}
Assume that $\A$ is an FANBW and let $w$ be a word.
Let $G^e_{w, \A}$ be the co-deterministic DAG of $\A$ over $w$.
Then $w \notin \lang{\A}$ iff $(G^{e}_{w, \A})^3$ is empty. 
\end{lemma}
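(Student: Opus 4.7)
The plan is to establish both directions via the existence of a stable level in $G^e_{w,\A}$ (Lemma~\ref{lem:unbw-rdag-nonacc-branch-stable}) and the characterization of acceptance via $\omega$-branches in co-deterministic DAGs (Lemma~\ref{lem:fanbw-run-acc}). I would prove the $(\Leftarrow)$ direction first by contraposition, since it is easy. Suppose $w \in \lang{\A}$; by Lemma~\ref{lem:fanbw-run-acc} there is an accepting $\omega$-branch $\hat{\run}$ in $G^e_{w,\A}$. Every vertex $v$ on $\hat{\run}$ is non-finite (the tail of $\hat{\run}$ from $v$ is an $\omega$-branch) and is not $\acc$-free (infinitely many $\acc$-vertices lie ahead on $\hat{\run}$). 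Hence no vertex of $\hat{\run}$ is removed in Step~1 (finite vertices) or Step~2 ($\acc$-free vertices), the whole branch survives into $(G^e_{w,\A})^2$, the same non-finiteness argument applies there, and so $(G^e_{w,\A})^3$ still contains $\hat{\run}$, hence is nonempty.

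For the $(\Rightarrow)$ direction, assume $w \notin \lang{\A}$. By Lemma~\ref{lem:unbw-rdag-nonacc-branch-stable} there is a stable level $k > 0$ such that every $\acc$-vertex at level $\geq k$ in $G^e_{w,\A}$ is finite. I would then trace the three removal steps explicitly. First, removing finite vertices from $G^e_{w,\A}$ eliminates every $\acc$-vertex at level $\geq k$, so $(G^e_{w,\A})^1$ contains no $\acc$-vertex beyond level $k-1$. Second, any vertex $v$ of $(G^e_{w,\A})^1$ at level $\geq k$ is still non-finite (the $\omega$-branches from $v$ in $G^e_{w,\A}$ consist entirely of non-finite vertices and so survive into $(G^e_{w,\A})^1$), while every vertex reachable from $v$ sits at a level $\geq k$ and is non-$\acc$ in $(G^e_{w,\A})^1$; hence $v$ is $\acc$-free. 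Step~2 therefore removes every vertex of $(G^e_{w,\A})^1$ at level $\geq k$, forcing $(G^e_{w,\A})^2$ to be confined to the finitely many levels below $k$. A DAG spanning only finitely many levels supports no $\omega$-branches, so every vertex of $(G^e_{w,\A})^2$ is finite, and Step~3 empties the DAG: $(G^e_{w,\A})^3 = \emptyset$.

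The main obstacle is the middle step of the forward direction, where one must verify that a non-$\acc$ surviving vertex $v$ at level $\geq k$ really has no path to any $\acc$-vertex inside $(G^e_{w,\A})^1$. This is precisely where the stable-level property of co-deterministic DAGs for FANBWs is indispensable: together with the fact that edges go forward to levels $\geq \ell(v) \geq k$, it guarantees that the only potentially obstructing $\acc$-vertices were already pruned as finite in Step~1. Without the finite-ambiguity assumption, no such stable level need exist, and the whole collapse of the rank from $2n$ down to $2$ fails, which is exactly why this lemma is the gateway to the improved $2^{\bigO(n)}$ bound in Section~\ref{ssec:improved-rank-algo}.
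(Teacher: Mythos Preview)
Your proposal is correct and follows essentially the same route as the paper: use Lemma~\ref{lem:unbw-rdag-nonacc-branch-stable} to obtain a stable level, argue that Step~1 clears all $\acc$-vertices past that level, Step~2 clears everything past that level, and Step~3 empties the remaining finite prefix. You spell out more than the paper does---in particular the $(\Leftarrow)$ direction (which the paper dismisses as ``trivial'') and the verification that surviving vertices past level~$k$ are genuinely $\acc$-free in $(G^e_{w,\A})^1$---but the underlying argument is the same.
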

\begin{proof}
Assume that $w \notin \lang{\A}$.
Our goal is to prove that starting from $(G^{e}_{w, \A})^0 = G^{e}_{w, \A}$, $(G^{e}_{w, \A})^3$ is empty.
By Lemma~\ref{lem:unbw-rdag-nonacc-branch-stable}, there exists a stable level, say $k > 1$, such that on each level $l \geq k$, the $\acc$-vertices are finite.
Therefore, $(G^{e}_{w, \A})^1$ contains only non-$\acc$-vertices after level $k$.
It follows that $(G^{e}_{w, \A})^2$ removes all the vertices after level $k$.
Thus if $(G^{e}_{w, \A})^2$ is not empty, $(G^{e}_{w, \A})^2$ contains only finite vertices.
We then conclude that $(G^{e}_{w, \A})^3$ is empty.
The other direction is trivial.
\end{proof}

In order to set the maximum rank to $2$ in Definition~\ref{def:complement-rank-kv}, the underlying DAG $G_{w, \A}$ constructed for complementing FANBWs has to be co-deterministic. Since \rkc generates rankings level by level, we have to utilize the reduced transition function $\trans^e$ for computing successors at next level. 
For FANBWs,  the complementation construction  in Definition~\ref{def:complement-rank-kv} can be improved accordingly:
\begin{customdef}{4'}
\label{def:complement-rank-kv-fanbws}
Let $\A = (\states, \inits, \trans, \acc)$ be an FANBW.
We then define an NBW $\A^c = (\states^c, \inits^c, \trans^c, \acc^c)$, where $\states^c$ and $\acc^c$ are as in Definition~\ref{def:complement-rank-kv}, and 
$\inits^c$ and $\trans^c$ are defined by:
\begin{itemize}
    \item $\inits^c = (f, \emptyset)$ where $f(q) = 2$ if $q \in \inits$ and $f(q) = \bot$ otherwise.
    \item $\trans^c$ is then defined as follows:
    \begin{enumerate}
\item if $O \neq \emptyset$, then $\trans^c((f, O), a) = \setcond{ (f', \delta^{e}(O, a) \setminus \odd{f'})}{ f' \cover{\trans^{e}}{a} f } $, 
\item if $O = \emptyset$, then $\trans^c((f, O), a) =  \setcond{ (f', \even{f'} ) }{ f' \cover{\trans^{e}}{a} f } $).
\end{enumerate}
where  $\trans^e$ is the reduced transition function at a level whose corresponding set of states and input letter are $\setcond{q \in \states}{ f(q) \neq \bot}$ and $a$, respectively.
\end{itemize}
\end{customdef}

Recall that the coverage relation between two level ranking functions $f$ and $f'$, parameterized with  $\trans^{e}$, is defined in Definition~\ref{def:coverage-level-rnk}.
Similarly to Definition~\ref{def:edge-relation-e}, to compute $\trans^e(S_{1}, a)$, one has to first compute the minimal set $S_{min}$ of predecessors of $S' = \trans(S, a)$ where $S$ is the domain of $f$, i.e., the set of states at current level and $a$ is the input letter at current level. 
Thus we have $\trans^e(S_{1}, a) = \trans(S_{1}\cap S_{min}, a)$.
Intuitively, for $w \in \infwords$, $\trans^e$ is used to construct a co-deterministic DAG $G^e_{w,\A}$ over $w$ level by level.
By Lemma~\ref{lem:unbw-rdag-nonacc-branch-max-rank}, the maximum rank of $G^e_{w,\A}$ is at most $2$, which is sufficient in Definition~\ref{def:complement-rank-kv-fanbws} for constructing a ranking function to identify whether $G^e_{w,\A}$ is accepting.
Therefore, with Definition~\ref{def:complement-rank-kv-fanbws}, we can construct a complementary NBW $\A^c$ with $2^{\bigO(n)}$ states.

\begin{theorem}[The Language and Size of $\A^c$ for FANBWs]
\label{thm:complexity-kv-fanbw}
Let $\A$ be an FANBW with $n$ states and $\A^c$ the NBW defined in Definition~\ref{def:complement-rank-kv-fanbws}. Then (1) $\lang{\A^c} = \infwords \setminus \lang{\A}$; and (2) $\A^c$ has $2^{\bigO(n)}$ states.
\end{theorem}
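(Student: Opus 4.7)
The plan is to mirror the correctness argument for the classical \rkc in Lemma~\ref{lem:correctness-rank-kv}, but carry it out over the co-deterministic DAG $G^e_{w,\A}$ instead of the full DAG $G_{w,\A}$. Since, by Lemma~\ref{lem:independent-reduced-trans}, the reduced transition function $\trans^e$ depends only on the current set of live states and the current letter, the level-by-level construction in Definition~\ref{def:complement-rank-kv-fanbws} genuinely produces runs whose successive domains $S_\ell = \{q \mid f_\ell(q) \neq \bot\}$ are exactly the level sets of $G^e_{w,\A}$. This is the bridge that lets us import the co-deterministic DAG machinery of Section~\ref{sec:reduced-run-dag} into the rank-based construction.

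For part (1), I would split the argument in the standard way. For the inclusion $\infwords \setminus \lang{\A} \subseteq \lang{\A^c}$, let $w \notin \lang{\A}$. By Lemma~\ref{lem:fanbw-run-acc}, $G^e_{w,\A}$ is non-accepting, and by Lemma~\ref{lem:unbw-rdag-nonacc-branch-max-rank} the peeling process $(G^e_{w,\A})^0 \supseteq (G^e_{w,\A})^1 \supseteq (G^e_{w,\A})^2$ empties after at most three rounds, so the KV ranking defined analogously on $G^e_{w,\A}$ assigns only ranks in $\{0,1,2\}$. Restricting this ranking to each level yields a sequence of level rankings $f_0, f_1, \ldots$ satisfying $f_{\ell+1} \cover{\trans^e}{\wordletter{w}{\ell}} f_\ell$; the usual Miyano--Hayashi breakpoint argument then shows the corresponding run in $\A^c$ is accepting because, past the stable level from Lemma~\ref{lem:unbw-rdag-nonacc-branch-stable}, every vertex assigned an even rank is finite and so every breakpoint set $O_\ell$ is eventually exhausted and reset to $\emptyset$ infinitely often. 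Conversely, for $\lang{\A^c} \subseteq \infwords \setminus \lang{\A}$, an accepting run $(f_0, O_0)(f_1, O_1)\cdots$ of $\A^c$ on $w$ witnesses a valid level ranking of $G^e_{w,\A}$ with maximum rank $2$ in which every $\omega$-branch eventually gets trapped in odd rank (otherwise some even-ranked vertex would survive in $O_\ell$ forever, preventing $O_\ell = \emptyset$ infinitely often). Hence $G^e_{w,\A}$ has no accepting $\omega$-branch, and Lemma~\ref{lem:fanbw-run-acc} gives $w \notin \lang{\A}$.

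For part (2), the counting is immediate. A level ranking function $f \in \R$ now maps $\states$ into $\{0,1,2\} \cup \{\bot\}$, of which there are at most $4^n$ choices, and the breakpoint component $O \subseteq \states$ contributes an additional factor of $2^n$. Therefore $|\states^c| \leq 4^n \cdot 2^n = 8^n \in 2^{\bigO(n)}$, in contrast to the $(2n+2)^n \cdot 2^n = 2^{\bigO(n \log n)}$ bound of the classical construction.

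The main obstacle I expect is the soundness direction of part~(1): one has to be careful that the transitions of $\A^c$, which use $\trans^e$ determined \emph{locally} from the current $f$-domain, really correspond to traversing $G^e_{w,\A}$ rather than some other co-deterministic thinning. Lemma~\ref{lem:independent-reduced-trans} is the key lemma that makes this work, because it says the choice of kept predecessor depends only on the current set of live states and the current letter, so the successor domains produced by $\trans^e$ along any run of $\A^c$ on $w$ coincide with the level sets of the unique $G^e_{w,\A}$ built in Section~\ref{sec:reduced-run-dag}. Once this alignment is established, the bounded-rank argument from Lemma~\ref{lem:unbw-rdag-nonacc-branch-max-rank} and the breakpoint verification can be applied essentially verbatim from the classical \rkc proof.
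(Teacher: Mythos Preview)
Your proposal is correct and follows essentially the same approach as the paper's proof: both directions of (1) rest on Lemma~\ref{lem:fanbw-run-acc} for the acceptance equivalence between $G^e_{w,\A}$ and $\A$, and on Lemma~\ref{lem:unbw-rdag-nonacc-branch-max-rank} to bound the maximum rank by~$2$, with the breakpoint set handling the even-rank verification as in the classical \rkc argument. You are actually more explicit than the paper on one point it glosses over, namely the use of Lemma~\ref{lem:independent-reduced-trans} to ensure that the locally computed $\trans^e$ in Definition~\ref{def:complement-rank-kv-fanbws} produces exactly the level structure of the global $G^e_{w,\A}$; this alignment is indeed needed and your treatment of it is sound.
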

\begin{proof}
The proof for claim (2) is trivial and thus omitted here.
By Lemma~\ref{lem:fanbw-run-acc} and definition of ranking functions, co-deterministic DAGs of $\A$ over $w\in\lang{\A}$ will be rejected in $\A^c$, thus $\lang{\A^c} \subseteq \infwords \setminus \lang{\A}$. According to the proof of Lemma~\ref{lem:unbw-rdag-nonacc-branch-max-rank}, there exists a unique ranking function for each rejecting co-deterministic DAG $G^e_{w,\A}$ of $\A$ over $w \notin \lang{\A}$.
This unique ranking function can be constructed in a way similar to the one in Lemma~\ref{lem:nonacc-run-odd-ranks}.
Since \rkc nondeterministically guesses rankings of $G^e_{w,\A}$, 
there must be a guess of such unique ranking function. It follows that $G^{e}_{w, \A}$ must be accepting in $\A^c$, i.e., $\infwords \setminus \lang{\A} \subseteq \lang{\A^c}$. Thus it holds that $\lang{\A^c} = \infwords \setminus \lang{\A}$.
\end{proof}

In~\cite{DBLP:journals/corr/abs-1110-6183}, Fogarty and Vardi proved that complementing reverse deterministic NBWs with \rkc is doable in $2^{\bigO(n)}$ as the non-reduced DAGs $G_{w,\A}$ are already co-deterministic.
This is because that if $\A$ is reverse deterministic, then each vertex $\vertex{q}{l}$ in $G_{w, \A}$ has at most one predecessor, as $q$ has only one $\wordletter{w}{l}$-predecessor.
It follows that $G_{w, \A}$ is co-deterministic. Similarly to Lemma~\ref{lem:unbw-rdag-num-stable}, the number of (accepting) $\omega$-branches in $G_{w, \A}$ is at most $\size{\states}$. According to Definition~\ref{def:fanbw}, reverse deterministic NBWs are a special class of FANBWs, as stated in Corollary~\ref{coro:rdnbw-is-fanbw}.

\begin{corollary}
\label{coro:rdnbw-is-fanbw}
Let $\A$ be a reverse deterministic NBW.
Then $\A$ is also an FANBW.
\end{corollary}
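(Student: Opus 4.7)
The plan is to reduce the claim directly to results already established in the excerpt, namely Lemma~\ref{lem:unbw-rdag-num-stable} together with the observation in the paragraph preceding the corollary that the run DAG of a reverse deterministic NBW is automatically co-deterministic. Thus essentially no new construction is needed; the work consists of lining up the definitions.

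First I would fix an arbitrary word $w \in \infwords$ and examine the run DAG $G_{w, \A}$. Since $\A$ is reverse deterministic, for every letter $a$ and every target state $q'$ there is at most one state $q$ with $q' \in \trans(q, a)$. Consequently, every vertex $\vertex{q'}{l+1}$ in $G_{w, \A}$ has at most one predecessor, i.e., $G_{w, \A}$ is already co-deterministic in the sense of Section~\ref{sec:reduced-run-dag} (no edge removal is necessary, so $G_{w, \A} = G^{e}_{w, \A}$).

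Next I would invoke Lemma~\ref{lem:unbw-rdag-num-stable}, which bounds the number of $\omega$-branches in any co-deterministic DAG by $\size{\states}$. Since runs of $\A$ over $w$ are in bijection with $\omega$-branches of $G_{w, \A}$, the total number of runs of $\A$ over $w$ is at most $\size{\states}$, and \emph{a fortiori} the number of accepting runs is at most $\size{\states}$. This bound is independent of $w$, so $\A$ satisfies Definition~\ref{def:fanbw}~(1) and is an FANBW (in fact, it is $\size{\states}$-ambiguous).

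There is no real obstacle here: the corollary is essentially a bookkeeping consequence of Lemma~\ref{lem:unbw-rdag-num-stable} once one observes that reverse determinism makes the full run DAG co-deterministic. The only subtle point I would be careful to state explicitly is that completeness of $\A$ plays no role in the argument; even without completeness, reverse determinism alone forces each vertex in $G_{w,\A}$ to have at most one predecessor, which is all the co-determinism argument requires.
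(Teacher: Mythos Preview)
Your proposal is correct and follows essentially the same approach as the paper: the paper does not give a separate proof block for this corollary but derives it in the preceding paragraph from the observation that reverse determinism makes $G_{w,\A}$ already co-deterministic, together with (the argument of) Lemma~\ref{lem:unbw-rdag-num-stable} to bound the number of (accepting) $\omega$-branches by $\size{\states}$. Your write-up just spells out the bijection with runs and the resulting $\size{\states}$-ambiguity a bit more explicitly.
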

In contrast, an FANBW is not necessarily a reverse deterministic NBW. For instance, the FANBW $\A$ of Figure~\ref{fig:example} is not reverse deterministic since $q_{1}$ has three $b$-predecessors, namely $q_{0}, q_{1}$ and $q_{2}$. We remark that the construction in~\cite{DBLP:journals/corr/abs-1110-6183} just sets the maximum rank to $2$ in Definition~\ref{def:complement-rank-kv} without modifying the transition function $\trans^c$, which turns out to be a special case of our construction according to Corollary~\ref{coro:rdnbw-is-fanbw}.
 
\section{Slice-Based Algorithm}
\label{sec:slice-based}

In Subsection~\ref{ssec:slice-run-dag}, we first recall the \emph{slice-based} complementation construction (\slc) described in~\cite{DBLP:conf/birthday/VardiW08,kahler2008complementation}, adapted using our notations, which produces a complementary NBW $\A^c$ of $\A$ with $\bigO((3n)^n)$ states. 
Then, in Subsection~\ref{ssec:opt-slice-algo}, we show that for FANBWs, this construction can be simplified while yielding a complementary NBW with $\bigO(4^n)$ states.

\subsection{Slice-Based Algorithm for NBWs}
\label{ssec:slice-run-dag}

Let $\A$ be an NBW, and let $w$ be a word.
\slc uses a data structure called \emph{slice} instead of level rankings to encode the set of vertices at the same level in $G_{w,\A}$.
A slice in \cite{DBLP:conf/birthday/VardiW08} is defined as an ordered sequence of disjoint sets of vertices at the same level. 

We now describe \slc from the perspective of building co-deterministic DAGs.
\slc does the following to construct a co-deterministic DAG $G^{s}_{w,\A}$ as it proceeds along the word $w$.
Here the superscript $s$ for \slc is used to distinguish the construction of co-deterministic DAGs $G^{e}_{w,\A}$ in Section~\ref{sec:reduced-run-dag}.
At level $0$, we may obtain at most two vertices of $G^{s}_{w, \A}$: a vertex $\vertex{S_{1}}{0} = \vertex{\inits \setminus \acc}{0} $ and an $\acc$-vertex $\vertex{S_{2}}{0}= \vertex{\inits \cap \acc}{0}$. Recall that $\inits$ and $\acc$ are the set of initial states and the set of accepting states of $\A$, respectively. Here $S_{1}$ and $S_{2}$ are disjoint. A vertex $\vertex{S_{j}}{i}$ is an $\acc$-vertex if $S_{j} \subseteq \acc$, where $j \geq 1$ and $i \geq 0$. The vertices $\vertex{S_{j}}{i}$ on level $i$ in $G^{s}_{w,\A}$ are ordered from left to right by their indices $j$ where $i \geq 0$ and $1 \leq j \leq n$. During the construction, empty sets $S_{j}$ are removed and the indices of remaining sets are reset according to the increasing order of their original indices.

Assume that on level $i$, the sequence of vertices in $G^{s}_{w, \A}$ is $\vertex{S_{1}}{i}, \cdots, \vertex{S_{k_{i}}}{i}$ where $i \geq 0$ and $1 \leq k_{i} \leq n$. We now describe how \slc constructs the vertices on level $i+1$. First, for a set $S_{j}$ where $ 1 \leq j \leq k_{i}$, on reading the letter $\wordletter{w}{i}$, the set of successors of $S_{j}$ is partitioned into (1) a non-$\acc$ set $S'_{2j-1} = \trans(S_{j}, \wordletter{w}{i}) \setminus \acc$, and (2) an $\acc$-set $S'_{2j} = \trans(S_{j}, \wordletter{w}{i}) \cap \acc$, as a possible new $\acc$-vertex.

This gives us a sequence of sets $S'_{1}, S'_{2}, \cdots, S'_{2k_{i}-1}, S'_{2k_{i}}$. Note that there can be some states in $\A$ present in multiple sets $S'_{j}$ where $j \geq 1$. Here we only keep the rightmost occurrence of a state. Intuitively, different runs of $\A$ may merge with each other at some level and we only need to  keep the right most one and cut off others, as they share the same infinite suffix. This operation does not change whether the co-deterministic DAG $G^{s}_{w, \A}$ is accepting, since at least one accepting run of $\A$ remains and will not be cut off. Formally, for each set $S'_{j}$ where $1 \leq j \leq 2k_{i} $, we define a set $S''_{j} = S'_{j} \setminus \bigcup_{ j<p \leq 2k_{i}} S'_{p}$. This yields a sequence of disjoint sets $S''_{1}, S''_{2}, \cdots, S''_{2k_{i}-1}, S''_{2k_{i}}$.
After removing the empty sets in this sequence and reassigning the index of each set according to their positions, we finally obtain the sequence of sets of vertices on level $i+1$, denoted by $\vertex{S_{1}}{l+1}, \cdots, \vertex{S_{k_{i+1}}}{l+1}$.
Obviously, the resulting sets at the same level are again pairwise disjoint.

Therefore, we define a co-deterministic DAG $G^{s}_{w, \A} = (V, E)$ of $\A$ over $w$ for an NBW $\A$ as follows:
\begin{itemize}
\item Vertices. $V = \bigcup_{l\geq 0, 1 \leq j \leq k_{i}} \setnocond{\vertex{S_{j}}{l}}$.
\item Edges. There is an edge from $\vertex{S_{j}}{l}$ to $\vertex{S_{h}}{l+1}$ iff $S_{h}$ is either $S''_{2j-1}$ or $S''_{2j}$ as defined above where $1 \leq j \leq k_{i}$ and $1\leq h \leq k_{i+1}$. 
\end{itemize}

By the definition of $G^{s}_{w, \A}$, each vertex $\vertex{S_{h}}{l+1}$ in which $S_{h}$ is either $S''_{2j-1}$ or $S''_{2j}$ computed from $S_{j}$ has at most one predecessor $\vertex{S_{j}}{l}$.
Thus $G^{s}_{w,\A}$ is co-deterministic.
Similarly, we have the following Lemma \ref{lem:finite-ambiguity-stable-levels}.

\begin{lemma}[Co-Deterministic DAGs for NBWs~\cite{DBLP:conf/birthday/VardiW08}]
\label{lem:finite-ambiguity-stable-levels}
Let $w \in \infwords$ and $G^{s}_{w, \A}$ be the co-deterministic DAG as defined above.
Then (1) the number of (accepting) $\omega$-branches in $G^{s}_{w, \A}$ is at most the number of states in $\A$. (2) $w$ is accepted by $\A$ if and only if $G^{s}_{w,\A}$ is accepting.
(3) There exists a stable level $l \geq 1$ in $G^{s}_{w,\A}$ such that all $\acc$-vertices after level $l$ are finite if and only if $w \notin \lang{\A}$.
\end{lemma}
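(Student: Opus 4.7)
My plan is to establish (1)--(3) by exploiting two structural features built into the slice construction: (a) \emph{co-determinism}, every successor vertex $\vertex{S_h}{l+1}$ in $G^{s}_{w,\A}$ is derived from a unique predecessor $\vertex{S_j}{l}$; and (b) \emph{disjointness}, at each level $i$ the sets $S_1,\ldots,S_{k_i}$ are pairwise disjoint subsets of $\states$, so $k_i \leq \size{\states}$. Together these mirror the hypotheses behind Lemmas~\ref{lem:unbw-rdag-num-stable}--\ref{lem:unbw-rdag-nonacc-branch-stable}, so the bulk of the argument is analogous.

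For~(1), I would first observe that two distinct $\omega$-branches of $G^{s}_{w,\A}$ can never re-merge, since sharing a vertex at any level $l$ would force them, by uniqueness of predecessors, to agree all the way back to level~$0$. Hence any finite collection of $m$ pairwise distinct $\omega$-branches occupies $m$ pairwise distinct vertices at every sufficiently large level $L$, and disjointness then gives $m \leq k_L \leq \size{\states}$, bounding the total number of $\omega$-branches.

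For~(2), the $(\Leftarrow)$ direction applies K\"onig's lemma to the finitely-branching tree of $\A$-states compatible with a fixed accepting $\omega$-branch $\vertex{S_{j_0}}{0}\vertex{S_{j_1}}{1}\cdots$: each $S_{j_l}$ is non-empty and $\trans$-connected to $S_{j_{l+1}}$, so an infinite path exists; since the split into $S'_{2j-1}$ and $S'_{2j}$ makes every vertex either a pure $\acc$-set or a pure non-$\acc$-set, that path visits $\acc$ exactly when the branch visits an $\acc$-vertex, yielding an accepting run. The $(\Rightarrow)$ direction is the main obstacle I anticipate. Given an accepting run $\rho=q_0q_1\cdots$, the \emph{shadow} sequence $v_l$ of vertices containing $q_l$ need not itself be a branch of $G^{s}_{w,\A}$: the rightmost rule defining $S''_j$ may reassign $q_{l+1}$ to a vertex derived from a strictly higher-indexed predecessor $\vertex{S_{j'}}{l}$ with $j' > j(l)$, producing a ``jump''. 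I plan to repair this by invoking co-determinism to trace the unique backward chain of $G^{s}_{w,\A}$ from the new vertex $v_{l+1}$: by the defining construction of $S''_j$, each step of that chain admits a $\trans$-consistent state in $\A$, so grafting the resulting prefix run $\sigma$ onto the suffix $q_{l+1}q_{l+2}\cdots$ yields an accepting run whose shadow now follows a genuine branch of $G^{s}_{w,\A}$ up to level $l+1$. Iterating this surgery, and invoking~(1) to bound the number of candidate branches the shadow can migrate through, I argue that after finitely many reassignments the shadow stabilises onto a single $\omega$-branch of $G^{s}_{w,\A}$; since $\rho$ visits $\acc$ infinitely often and $\acc$-states sit exclusively in $\acc$-vertices, that $\omega$-branch is accepting.

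For~(3), the argument transposes Lemma~\ref{lem:unbw-rdag-nonacc-branch-stable}: if $w \notin \lang{\A}$, then by~(2) every $\omega$-branch is non-accepting, so each has a level past which it meets no $\acc$-vertex on an $\omega$-branch, and by~(1) the maximum of these finitely many levels serves as a stable level; conversely, a stable level precludes any $\omega$-branch from meeting $\acc$-vertices infinitely often, so no accepting $\omega$-branch exists and~(2) forces $w \notin \lang{\A}$.
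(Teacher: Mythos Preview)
The paper does not supply its own proof of this lemma; it is quoted from~\cite{DBLP:conf/birthday/VardiW08}. Your treatment of~(1),~(3), and the $(\Leftarrow)$ direction of~(2) is sound and parallels the arguments given for $G^{e}_{w,\A}$ in Section~\ref{sec:reduced-run-dag}.

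The gap is in~(2)$(\Rightarrow)$. You correctly observe that the shadow $(v_l)_l$ of an accepting run need not be a branch of $G^{s}_{w,\A}$, and that each jump moves to a predecessor of strictly higher index (this is the content of the rightmost rule). But the conclusion you draw---that~(1) bounds the number of surgeries so that ``after finitely many reassignments the shadow stabilises onto a single $\omega$-branch''---does not follow. Each surgery occurs at a strictly larger level, and nothing prevents infinitely many of them; moreover, each surgery may rewrite the \emph{entire} prefix (the unique backward path from $v_{l'+1}$ need not pass through $v_{l+1}$), so no monotone ``consistent prefix'' accumulates. Part~(1) bounds the $\omega$-branches, not the finite side-branches the shadow may wander through between them. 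What does go through is a \emph{limit} argument: for each fixed level $l$, the level-$l$ ancestor of $v_L$ is nondecreasing in $L$ (your rightward-jump observation, propagated back via the order-preserving parent map) and bounded by $n$, hence stabilises to some vertex $a_l$; the sequence $(a_l)_l$ is then a genuine $\omega$-branch. The step your outline omits entirely is why $(a_l)_l$ is \emph{accepting}. This hinges on the slice convention that the $\acc$-child $S'_{2j}$ sits to the \emph{right} of the non-$\acc$-child $S'_{2j-1}$: whenever the finite path $\beta_L$ to $v_L$ first diverges from $(a_l)_l$, it must take the left child (by the monotonicity of ancestors), forcing $(a_l)_l$ to take the right---hence $\acc$---child at that level; since the divergence level tends to infinity with $L$, this places infinitely many $\acc$-vertices on $(a_l)_l$. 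Without this additional piece your surgery sketch does not close.
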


\slc for general NBWs can be viewed as consisting of two components: (1) based on the construction of co-deterministic DAGs $G^{s}_{w, \A}$ over $w$ above,  NBWs can be translated to FANBWs~\cite{LodingP18} and (2) a specialized complementation algorithm for FANBWs.
In~\cite{DBLP:conf/birthday/VardiW08}, $\slc$ utilizes these two components at the same time for computing the complementary NBW $\A^c$.

A state of $\A^c$ is an ordered sequence of tuples $(S_{1}, l_{1}), \cdots, (S_{h}, l_{h})$ where the ordered sequence
$(S_{1}, \cdots, S_{h}) $ is a slice,
and each vertex $\vertex{S_j}{l}$ is decorated with a label  $l_{j} \in \setnocond{\textsf{die}, \textsf{inf}, \textsf{new}}$.
The level index $l$ is omitted during the construction of $\A^c$.
Intuitively,
\begin{itemize}
    \item  $\textsf{die}$-labelled vertex means that those states in $S_{j}$ are currently being inspected.
    For $w$ to be accepted (i.e., $w\not\in\mathcal{L}(\mathcal{A})$),  $\textsf{die}$-labelled vertices should eventually reach empty set after a finitely many steps, thus become finite.
    Recall that empty sets will be removed in the construction of $G^s_{w,\A}$.
    \item $\textsf{inf}$-labelled vertex indicates all states that never reach accepting states.
    \item $\textsf{new}$-labelled vertex records new encountered states, that should be inspected later once the $\textsf{die}$-labelled vertex becomes empty. 
\end{itemize}

Obviously, here $h$ is at most the number $n$ of states in $\A$. 
While for FANBWs, thanks to their finite ambiguity, the construction for co-deterministic DAGs can be simplified (see Section~\ref{sec:reduced-run-dag}): we can even use three components $(N,C,B)$ to compactly encode the slice and their labels. 
We  postpone the details of the construction to the next subsection. Now we recall the complexity of the above slice based construction:
\begin{lemma}[The Language and Size of $\A^c$ for NBWs~\cite{DBLP:conf/birthday/VardiW08} ]
\label{lem:size-language-slice}
Let $\A$ be an NBW with $n$ states and $\A^c$ the NBW constructed by $\slc$ in Section~\ref{sec:slice-based}.
Then $\lang{\A^c} = \infwords \setminus \lang{\A}$ and $\A^c$ has $\bigO((3n)^n)$ states.
\end{lemma}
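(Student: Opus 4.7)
The plan is to prove the two claims separately, following the standard breakpoint-style argument used for slice-based complementation. Throughout, I identify a state of $\A^c$ with an ordered labeled slice $(S_1, l_1), \dots, (S_h, l_h)$ with $h \leq n$, where the $S_j$'s are pairwise disjoint subsets of $\states$ and each $l_j \in \setnocond{\textsf{die}, \textsf{inf}, \textsf{new}}$. A run of $\A^c$ over $w$ induces, by projecting out the labels, a unique trajectory through the levels of the co-deterministic DAG $G^s_{w, \A}$ described above.

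For the language claim, I would prove the two inclusions using Lemma~\ref{lem:finite-ambiguity-stable-levels}. For $\lang{\A^c} \subseteq \infwords \setminus \lang{\A}$, suppose $w$ is accepted by an accepting run of $\A^c$, so the \textsf{die}-component is emptied infinitely often. By design, whenever \textsf{die} becomes empty, every $\acc$-vertex currently being tracked has been shown to be finite in $G^s_{w, \A}$, and the next round promotes previously \textsf{new}-labelled vertices (and newly encountered $\acc$-vertices) into \textsf{die} to be inspected. Iterating this argument shows that every $\acc$-vertex in $G^s_{w, \A}$ is finite, so no accepting $\omega$-branch exists; by Lemma~\ref{lem:finite-ambiguity-stable-levels}(2), $w \notin \lang{\A}$. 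For the converse, suppose $w \notin \lang{\A}$. By Lemma~\ref{lem:finite-ambiguity-stable-levels}(3) there is a stable level $l$ after which all $\acc$-vertices in $G^s_{w, \A}$ are finite. I would then exhibit an accepting run of $\A^c$ by nondeterministically choosing, for each inspection round starting after level $l$, the labeling in which exactly the currently-tracked $\acc$-descendants receive the \textsf{die} label and \textsf{inf} is used only for $\acc$-free vertices; because each \textsf{die} vertex is finite, its contents drain to $\emptyset$ in finitely many steps, triggering a new inspection round, and so on, producing infinitely many emptying events.

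For the size claim, I would count labeled slices. A state of $\A^c$ is determined by (i) a partition of some subset $T \subseteq \states$ into an ordered tuple of nonempty blocks $S_1, \dots, S_h$ with $h \leq n$, together with (ii) a label in $\setnocond{\textsf{die}, \textsf{inf}, \textsf{new}}$ for each block. Equivalently, each of the $n$ states of $\A$ independently receives either the tag ``absent'' or a pair (block-index, label) from $\range{n}_{\geq 1} \times \setnocond{\textsf{die}, \textsf{inf}, \textsf{new}}$, giving at most $1 + 3n$ choices per state, hence at most $(3n+1)^n \in \bigO((3n)^n)$ states. (Some of these tuples do not correspond to reachable slices because indices must form a contiguous prefix of $\range{n}_{\geq 1}$, but we only need an upper bound.)

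The main obstacle is the converse direction of the correctness proof: one must verify that the nondeterministic choice of labels inside $\A^c$ is rich enough to produce a run that perfectly aligns \textsf{die} with ``currently verifiable finite vertices''. This requires an induction on inspection rounds showing that, after the stable level, whenever the \textsf{die}-set is reset from the \textsf{new}-set, the newly tracked vertices are still guaranteed to be finite; this is where the co-determinism of $G^s_{w, \A}$ and Lemma~\ref{lem:finite-ambiguity-stable-levels}(1), which bounds the number of $\omega$-branches, do the real work.
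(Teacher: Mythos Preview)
The paper does not give its own proof of this lemma: it is stated with a citation to \cite{DBLP:conf/birthday/VardiW08} and used as a black box, so there is nothing to compare against at the level of argument. Your outline is essentially the standard breakpoint argument for the slice-based construction from that reference, and both the correctness sketch via Lemma~\ref{lem:finite-ambiguity-stable-levels} and the $(3n+1)^n \in \bigO((3n)^n)$ counting are the expected ones.

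One point worth tightening in your soundness direction: you argue that infinitely many emptyings of the \textsf{die}-component force every $\acc$-vertex of $G^s_{w,\A}$ to be finite, but the labels are guessed nondeterministically, so you must also rule out a run that hides an accepting $\omega$-branch under the \textsf{inf} label forever. In the actual construction this is handled by the transition rule that forces any $\acc$-successor of an \textsf{inf}-labelled block out of \textsf{inf} (into \textsf{new}/\textsf{die}); your sketch implicitly relies on this but does not state it. With that invariant made explicit, the contrapositive argument---an accepting $\omega$-branch of $G^s_{w,\A}$ eventually lands in a \textsf{die}-block that can never empty---goes through cleanly.
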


\subsection{Slice-Based Algorithm for FANBWs}
\label{ssec:opt-slice-algo}

We now propose the specialized complementation construction for FANBWs. Recall that, as discussed in Subsection~\ref{ssec:slice-run-dag}, this construction is also the second component of \slc, used for complementing general NBWs.

We first provide some intuitions. According to Lemma~\ref{lem:unbw-rdag-nonacc-branch-stable},  given a word $w \notin \lang{\A}$, there exists a stable level $k$ in the co-deterministic DAG $G^{e}_{w,\A}$ such that each $\acc$-vertex on a level after $k$ is finite. Therefore, in the construction of $\A^c$, we can nondeterministically guess level $k$ and then use breakpoint construction to verify that our guess is correct, in analogy with \rkc. More precisely, when constructing the complementary NBW $\A^c$, there are the \emph{initial phase} and the \emph{accepting phase}. The initial phase is purely a subset construction to trace the reachable states of each level of the co-deterministic DAG $G^{e}_{w,\A}$ over $w$. On reading a letter at a state of $\A^c$ (called \emph{macrostate}) in the initial phase, the run of $\A^c$ over $w$ (called \emph{macrorun}) either continues to stay in the initial phase or jumps to the accepting phase. Once entering the accepting phase, we guess that the macrorun of $\A^c$, which consists of multiple runs of $\A$, has reached the stable level $k$. Thus in the accepting phase, we need a breakpoint construction to verify that the guess is correct, i.e., that all $\acc$-vertices after level $k$ are finite.

In the accepting phase, we use a macrostate, represented as a triple $(N, C, B)$, to encode the set of vertices and their labels on a level after $k$ in the co-deterministic DAG $G^e_{w,\A}$ (or $G^{s}_{w,\A}$ for general NBWs accordingly), where
\begin{itemize}
\item the set $N$ keeps all the reachable vertices on the level, corresponding to the set of all vertices labelled with $\textsf{die}$, $\textsf{inf}$ and $\textsf{new}$;
\item the set $C$ keeps all the finite vertices on the level. That means, it contains both $\textsf{new}$-labelled vertices recording new encountered states, and $\textsf{die}$-labelled vertices being inspected now.
\item the set $B \subseteq C$ as a breakpoint construction is used to verify that the guess on the set $C$ of finite vertices is correct, corresponding to the set of vertices labelled with $\textsf{die}$.
\end{itemize} 
Recall that $\textsf{die}$, $\textsf{inf}$ and $\textsf{new}$ are three labels of vertices used in \slc for complementing general NBWs, as described in Subsection~\ref{ssec:slice-run-dag}.
The specialized complementation algorithm for FANBWs is formalized below.
\begin{definition}
\label{def:unbw-complement}
Let $\A = (\states, \inits, \trans, \acc)$ be an FANBW.
We then define an NBW $\A^c= (\states^c, \inits^c, \trans^c, \acc^c)$ as follows.
\begin{itemize}
\item $\states^c \subseteq 2^{\states}\cup 2^{\states}\times 2^{\states} \times 2^{\states}$;
\item $\inits^c = \setnocond{\inits }$;
\item $\trans^c = {\ntrans}^c \cup {\jtrans}^c\cup{\dtrans}^c$ is defined as follows:
\begin{enumerate}
\item ${\ntrans}^c(S, a) = \trans^{e}(S, a)$ for $S \subseteq \states $ and $a\in\alphabet$ where $\trans^{e}$ is the reduced transition function at current level whose corresponding set of states and input letter are $S$ and $a$, respectively. (intuition: subset construction to organize the macrorun before the guess point).
\item ${\jtrans}^c(S, a) = {\dtrans}^c((N, C ,B), a)$ where $N = S, B = S\cap\acc $ and $C = B$ (intuition: make the guess point to be the macrostate $(N, C, B)$).
\item ${\dtrans}^c((N, C, B), a) = (N', C', B')$ where $\trans^{e}$ is the reduced transition function at current level whose corresponding set of states and input letter are $N$ and $a$, respectively, and 
\begin{itemize}
\item $N' = \trans^{e}(N, a)$ (intuition: tracing the reachable states correctly),
\item $C' = \trans^{e}(C, a) \cup (N' \cap \acc)$ (intuition: tracing the runs which has visited accepting states after the guess point), and
\item if $B \neq \emptyset$, then $B' = \trans^{e}(B, a)$ and otherwise $B' = C'$ (intuition: $B = \emptyset$ means all runs which have visited accepting states are finite and $B \neq \emptyset$ indicates that previous runs are still under inspection).
\end{itemize}
\end{enumerate}
\item $\acc^c = \setcond{(N, C, B) \in \states^c}{ B = \emptyset }$.
\end{itemize}
\end{definition}

\begin{remark}

As a side remark, we note that the complementary NBW constructed by Definition~\ref{def:unbw-complement} is \emph{limit deterministic}, as the state set $\states^c$ of $\A^c$ can be partitioned into two disjoint sets $\states^c_{N} \subseteq 2^{\states}$ and $\states^c_{D} \subseteq 2^{\states} \times 2^{\states} \times 2^{\states}$ such that
1) $\acc^c \subseteq \states^c_{D}$ and 2) for each state $q \in \states^c_{D}$ and $a \in \alphabet$, we have that $\size{\trans^c(q, a)} \leq 1$.   
\end{remark}

\begin{theorem}[The Language and Size of $\A^c$ for FANBWs]
Let $\A$ be an FANBW with $n$ states and $\A^{c}$ be the NBW defined by Definition~\ref{def:unbw-complement}.
Then
(1) $\lang{\A^c} = \infwords \setminus \lang{\A}$; and
(2) $\A^c$ has $2^n + 4^n$ states. 
\end{theorem}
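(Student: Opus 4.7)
The plan is to prove the two claims separately, with the size bound following from a simple invariant and counting argument, and the language correctness following from a two-direction argument relating accepting macroruns of $\A^c$ to the co-deterministic DAG $G^e_{w,\A}$ via Lemma~\ref{lem:unbw-rdag-nonacc-branch-stable}.

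For claim (2), I would first establish by induction that every reachable macrostate $(N,C,B)$ in the accepting phase satisfies the nested invariant $B \subseteq C \subseteq N$. At the jump transition $\jtrans^c$, this holds by construction ($N=S$, $C=B=S\cap\acc$). For the inductive step under $\dtrans^c$, monotonicity of $\trans^e$ with respect to set inclusion gives $\trans^e(C,a)\subseteq\trans^e(N,a)=N'$ and $N'\cap\acc\subseteq N'$, hence $C'\subseteq N'$; and $B'$ is either $\trans^e(B,a)\subseteq \trans^e(C,a)\subseteq C'$ or $C'$ itself. Given $B\subseteq C\subseteq N\subseteq\states$, each of the $n$ states of $\A$ lies in exactly one of the four disjoint regions $\states\setminus N$, $N\setminus C$, $C\setminus B$, $B$, yielding at most $4^n$ triples, plus $2^n$ states for the initial phase, for a grand total of $2^n+4^n$.

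For claim (1), the inclusion $\lang{\A^c}\subseteq\infwords\setminus\lang{\A}$ goes as follows. Take an accepting macrorun $\rho^c$ of $\A^c$ on $w$. It stays in the initial phase for some prefix, tracking levels of $G^e_{w,\A}$ via $\trans^e$ by Definition~\ref{def:edge-relation-e}, then jumps at some level $k$ to a macrostate with $N=S_k$, $C=B=S_k\cap\acc$, and thereafter remains in the accepting phase with $B$ becoming empty infinitely often. I would then show by induction on the levels after $k$ that $C_\ell$ is exactly the set of states at level $\ell$ that are reached by a branch of $G^e_{w,\A}$ passing through an $\acc$-vertex at or after level $k$, and that $B_\ell$ tracks those finite branches currently under inspection. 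Each time $B$ empties, all such branches currently being inspected have died, so every $\acc$-vertex at or after level $k$ lies on a finite branch of $G^e_{w,\A}$. Hence $k$ is a stable level of $G^e_{w,\A}$, and by Lemma~\ref{lem:unbw-rdag-nonacc-branch-stable} we get $w\notin\lang{\A}$.

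For the reverse inclusion $\infwords\setminus\lang{\A}\subseteq\lang{\A^c}$, assume $w\notin\lang{\A}$. By Lemma~\ref{lem:unbw-rdag-nonacc-branch-stable} there is a stable level $k$ in $G^e_{w,\A}$ such that every $\acc$-vertex at or after $k$ is finite. I would construct a macrorun of $\A^c$ that stays in the initial phase tracking the level-$\ell$ vertex sets of $G^e_{w,\A}$ for $\ell<k$, then applies $\jtrans^c$ at level $k$, and thereafter follows $\dtrans^c$. The $C_\ell$ obtained then consists of endpoints of branches passing through $\acc$ after level $k$; by the choice of $k$, every such branch is finite, so $B_\ell$ is emptied and reset infinitely often, meaning $\acc^c$ is visited infinitely often, so $w\in\lang{\A^c}$.

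The main obstacle is the first direction of the language equality: giving a rigorous inductive argument that the breakpoint mechanism on $(N,C,B)$ indeed certifies finiteness of every post-jump $\acc$-vertex in $G^e_{w,\A}$. This requires carefully matching the semantics of $C$ (states reachable through a post-jump $\acc$-vertex) and $B$ (the subset currently being checked for extinction) to branches of the co-deterministic DAG, relying on the fact that $\trans^e$ in Definition~\ref{def:edge-relation-e} is precisely the transition function generating level transitions in $G^e_{w,\A}$.
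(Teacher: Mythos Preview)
Your proposal is correct, and for claim~(2) and the inclusion $\infwords\setminus\lang{\A}\subseteq\lang{\A^c}$ it matches the paper's argument essentially verbatim (the paper also counts four possibilities per state for $(N,C,B)$ and invokes the stable-level Lemma~\ref{lem:unbw-rdag-nonacc-branch-stable} to build an accepting macrorun when $w\notin\lang{\A}$).

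The one genuine difference is in the inclusion $\lang{\A^c}\subseteq\infwords\setminus\lang{\A}$. You argue it \emph{directly}: from an accepting macrorun you extract a stable level of $G^e_{w,\A}$ via the breakpoint semantics of $(C,B)$, and then conclude $w\notin\lang{\A}$ by the ``only if'' direction of Lemma~\ref{lem:unbw-rdag-nonacc-branch-stable}. The paper instead argues the \emph{contrapositive}: assuming $w\in\lang{\A}$, it takes an accepting run of $\A$, observes that the recurring accepting state $q_f$ lands in $C$ (hence eventually in $B$ at a reset), and then, appealing to Lemma~\ref{lem:vertices-one-pred}, concludes that $B$ stays nonempty forever. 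Your route has the advantage of using Lemma~\ref{lem:unbw-rdag-nonacc-branch-stable} symmetrically for both inclusions and of making the invariant ``$C_\ell$ is the set of states reachable through a post-jump $\acc$-vertex'' explicit, which cleanly justifies why infinitely many $B=\emptyset$ events force all such branches to be finite. The paper's route is shorter but relies on the reader seeing that the accepting run can be taken to be an accepting $\omega$-branch of $G^e_{w,\A}$ (so that its edges are $\trans^e$-edges and hence the branch, once in $B$, is trapped there); without that observation the step ``$q_f\in B_j\Rightarrow B_p\neq\emptyset$ for all $p\ge j$'' is not immediate. Your more careful tracking of $C$ and $B$ against branches of $G^e_{w,\A}$ avoids this subtlety.
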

\begin{proof}
We prove claim (1) as follows. 
Suppose $w \in \lang{\A} $, our goal is to prove $w$ is not accepted by $\A^c$.
Assume that the corresponding accepting run of $\A$ over $w$ is $\run$ and $\run'$ is a macrorun of $\A^c$ over $w$.
Then for the macrorun $\run'$:
(1) if $\run'$ only visits states of the form $s \in 2^{\states}$, then $\run'$ is not accepted by $\A^c$ since no accepting $\A^{c}$-states will be visited;
(2) if $\run'$ is a macrorun of the form $s_0, \cdots s_{k-1}, (N_k, C_k, B_k) (N_{k+1}, C_{k+ 1}, B_{k+1}) \cdots$, $\run$ will visit some accepting state, say $q_f \in \acc$ infinitely often.
Then at some point, say in state $(N_j, C_j, B_j)$, we have $q_f \in B_j$ or $q_f \in C_j$.
If $q_f \in B_j$, then for every $p \geq j$, we have $B_{p} \neq \emptyset$ according to Lemma~\ref{lem:vertices-one-pred};
otherwise $q_f \in C_j$, then either at some point, say $p > j$, $q_f$ will be moved to $B_p$ when $B_{p-1} = \emptyset$, or $q_f \in C_{p}$ for each $p \geq j$, which indicates that $B_{p} \neq \emptyset$ for $p \geq j$.
Therefore, $w$ is not accepted by $\A^c$.  

Assume that $w \notin \lang{\A}$, our goal is to prove that there exists an accepting macrorun $\run'$ of $\A^c$ over $w$.
The proof idea is to analyze the co-deterministic DAG $G^{e}_{w,\A}$ of $\A$ over $w$.
According to Lemma~\ref{lem:unbw-rdag-nonacc-branch-stable}, there exists a stable level $k \geq 1$ such that every $\acc$-vertex on a level after $k$ of $G^e_{w,\A}$ is finite.
Therefore, the set $B$ on $\run'$ will become empty infinitely often, i.e., $w$ is accepted by $\A^c$.

We now prove claim (2).
By Definition~\ref{def:unbw-complement}, the number of possible states of the form $s \in 2^{\states}$ is $2^n$.
For each state $p = (N, C, B) \in \states^c$ of $\A^c$, we have that $C \subseteq N$ and $B \subseteq C$.
Then for a state $q \in \states$: 
(i) it will either be absent or present in $N$;
(ii) for a state $q \in N$, one of the following three possibilities holds:
$q$ is only in $N$, $q$ is both in $C$ and $N$ and $q$ is both in $B$ and $C$.
Therefore $\A^c$ has at most $2^n + 4^n$ states.
\end{proof}

As a consequence of Definition~\ref{def:unbw-complement}, we can define a subsumption relation between the macrostates of $\A^c$ below.
\begin{corollary}[Subsumption Relation between Macrostates]
\label{coro:inclusion}
Let $\A$ be an FANBW and $\A^{c}$ the complementary NBW of $\A$ defined by Definition~\ref{def:unbw-complement}, and $m = (N, C, B)$ and $m' = (N', C', B')$ are two macrostates of $\A^{c}$ such that $N = N'$ and $C \subseteq C'$. Then $\lang{(\A^{c})^{m'}} \subseteq \lang{(\A^{c})^{m}} $ or $m$ subsumes $m'$.
\end{corollary}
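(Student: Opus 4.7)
The plan is to prove $m$ subsumes $m'$, i.e.\ $\lang{(\A^c)^{m'}}\subseteq\lang{(\A^c)^m}$. I will fix $w\in\lang{(\A^c)^{m'}}$ and consider the two parallel macroruns $\run=m_0m_1\cdots$ and $\run'=m'_0m'_1\cdots$ of $\A^c$ on $w$ from $m$ and $m'$, respectively. By the Remark following Definition~\ref{def:unbw-complement}, $\A^c$ is deterministic on every macrostate of the form $(N,C,B)$, so both macroruns are uniquely determined by $w$. Writing $m_i=(N_i,C_i,B_i)$ and $m'_i=(N'_i,C'_i,B'_i)$, the first step is to establish by induction on $i$ three invariants: (i)~$N_i=N'_i$, since $N$ evolves deterministically under $\trans^e$ from equal starting sets; (ii)~$C_i\subseteq C'_i$, using monotonicity of $\trans^e$ together with $N_{i+1}\cap\acc=N'_{i+1}\cap\acc$; and (iii)~$B_i\subseteq C_i$ and $B'_i\subseteq C'_i$, the standard ``breakpoint inside tracker'' property preserved by both branches of the $\dtrans^c$ case split and therefore valid on every macrostate reachable in the accepting phase.

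The heart of the argument is to translate the acceptance of $\run'$ into a structural statement about the co-deterministic DAG. Because $B'_i=\emptyset$ for infinitely many $i$, between any two consecutive such indices $t_k<t_{k+1}$ the breakpoint is reloaded as $B'_{t_k+1}=C'_{t_k+1}$ and then iterates purely under $\trans^e$ until it vanishes at $t_{k+1}$. I plan to show this forbids any $v\in C'_{t_k+1}$ from lying on an $\omega$-branch: since $C'_{j+1}\supseteq\trans^e(C'_j,a_j)$, the iterated $\trans^e$-descendants of such a $v$ would survive inside every subsequent $C'_{j}$, so $(\trans^e)^s(C'_{t_k+1})$ would stay nonempty for all $s$, blocking the next emptying of $B'$.

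To transfer this certificate to $\run$, I will push it through the chain of inclusions $B\subseteq C\subseteq C'$. Because $\trans^e$-descendants accumulate inside $C$ exactly as they do inside $C'$, the $j$-th $\trans^e$-descendant of any $v\in B\subseteq C_0$ lies in $C_j$ and hence in $C'_j$ by invariant~(ii). If some $v\in B$ lay on an $\omega$-branch, that descendant would populate $C'_j$ for arbitrarily large $j$, contradicting the property just established. Thus $B$ consists entirely of finite vertices, so $(\trans^e)^i(B_0)$ reaches $\emptyset$ in finitely many steps; at the first such reset, $B$ is reloaded with $C_{i+1}\subseteq C'_{i+1}$, and the same argument applied at level $i+1$ shows the reloaded batch also dies in finite time. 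A short induction on reset events then yields $B_i=\emptyset$ for infinitely many $i$, so $\run$ is accepting and $w\in\lang{(\A^c)^m}$, as required.

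The step I expect to dominate the proof is this DAG-theoretic extraction: turning the automaton-level condition ``$B'$ empties infinitely often'' into the geometric statement ``no $v\in C'_j$ lies on an $\omega$-branch of the co-deterministic DAG.'' The subtlety is that $C'$ is a growing, self-refreshing set that both propagates $\trans^e$-descendants and absorbs fresh accepting vertices from $N_{j+1}\cap\acc$ at every level, so one has to argue carefully why the persistence of a single $\omega$-branching vertex inside $C'$ really does block every sufficiently late reload of $B'$ from emptying. Once that DAG-level certificate is in place, monotonicity of $\trans^e$ together with $B\subseteq C\subseteq C'$ makes the final transfer to $B$ essentially mechanical.
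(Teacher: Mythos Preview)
Your proposal is correct and follows essentially the same route as the paper's own proof: both fix $w\in\lang{(\A^c)^{m'}}$, argue that the infinitely many emptyings of $B'$ force every vertex in the relevant $C'$-sets to be finite in the co-deterministic DAG (the paper phrases this via the ``stable level'' of $G^e_{w,\A^{N'}}=G^e_{w,\A^{N}}$, you phrase it directly via $\omega$-branches), and then transfer this finiteness through $B\subseteq C\subseteq C'$ together with $N_i=N'_i$ to conclude that $B$ also empties infinitely often. Your explicit listing of the invariants $N_i=N'_i$, $C_i\subseteq C'_i$, and $B_i\subseteq C_i$ simply spells out what the paper uses tacitly.
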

\begin{proof}
Let $w = a_{0} a_{1} \cdots \in \infwords$.
Let $\rho = (N_{0} = N, C_{0} = C, B_{0} = B) (N_{1}, C_{1}, B_{1}) \cdots (N_{k}, C_{k}, B_{k}) \cdots$ be the macrorun of $(\A^c)^m$ over $w$.
Similarly, the macrorun of $(\A^c)^{m'}$ over $w$ is $\rho' = (N'_{0} = N', C'_{0} = C', B'_{0} = B') (N'_{1}, C'_{1}, B'_{1}) \cdots (N'_{k}, C'_{k}, B'_{k}) \cdots$.
Assume that $w \in \lang{(\A^{c})^{m'}}$, i.e., there are infinitely many empty $B'$-sets in $\rho'$ according to Definition~\ref{def:unbw-complement}.
It follows that the level $0$ in the co-deterministic DAG $G^e_{w, \A^{N'}}$ of $\A^{N'}$ over $w$ is a stable level, i.e., each $\acc$-vertex in $G^e_{w, {\A^{N'}}}$ is finite.
(Recall that $\A^{N'}$ is an NBW obtained from $\A$ by setting the set of initial states of $\A$ to $N'$.)
This is because that by Definition~\ref{def:unbw-complement}, each branch from an $\acc$-vertex in $G^e_{w, {\A^{N'}}}$ will eventually be put in the $B'$-set and if one such branch is not finite, the $B'$-set will become empty for only finitely many times, contradicting with the assumption that $w \in \lang{(\A^{c})^{m'}}$.
By definition of the construction of co-deterministic DAGs in Section~\ref{sec:reduced-run-dag}, the co-deterministic DAG $G^e_{w, \A^{N}}$ of $\A^{N}$ over $w$ is identical to $G^e_{w, {\A^{N'}}}$ since $N = N'$.
Consequently, the level $0$ is also a stable level in $G^e_{w, {\A^{N}}}$.
That is, each $\acc$-vertex in $G^e_{w, {\A^{N}}}$ is also finite. 
Since the $B'$-set in $\rho'$ becomes empty and is reset to $C'$ for infinitely many times, all branches from $C'$ are finite.
It follows that all the branches from $B \subseteq C$ are also finite since $C \subseteq C'$.
Then there exists a least integer $j \geq 0$ in $\rho$ such that $B_{j} = \emptyset$.
Since all branches in the $C$-set (including new branches coming from the $N$-set) are finite, there are infinitely many integers $k \geq j$ such that $B_{k} = \emptyset$ in $\rho$.
It follows that $w \in \lang{(\A^{c})^{m}}$, which indicates that $\lang{(\A^{c})^{m'}} \subseteq \lang{(\A^{c})^{m}} $.
\end{proof}

Corollary~\ref{coro:inclusion} provides the possibility to avoid the exploration of $m'$ when $\lang{(\A^c)^{m}}$ has already been found to be empty, when checking the language-containment between an NBW and an FANBW $\A$.
It follows that one can also use this subsumption relation to avoid construction of redundant macrostates during the construction of $\A^c$, thus reducing the number of macrostates in $\A^c$. 

\section{Conclusion and Future Work}\label{sec:conclusion}
This work exploits co-deterministic DAGs over infinite words as a unified tool to optimize both \rkc and \slc constructions.
Consequently, we have improved the complexity of the classical \rkc and \slc constructions for FANBWs, respectively, to $2^{\bigO(n)}$ from $2^{\bigO( n \log n)}$ and to $\bigO( 4^n)$ from $\bigO( (3n)^n)$, based on co-deterministic DAGs.
As a further contribution, we view the \slc algorithm explicitly as the construction of co-deterministic DAGs and a specialized complementation algorithm for FANBWs. We then provide a subsumption relation between states in the complementary NBWs of FANBWs in hope of improving the containment checking between an NBW and an (FA)NBW.

As future work, we plan to study whether $\bigO(4^n)$ is also the lower bound for the complementation of FANBWs.
An empirical evaluation on how the subsumption relation between macrostates proposed in Corollary \ref{coro:inclusion} will benefit the containment checking problem is worthy of exploring.
Moreover, we will also explore a Ramsey-based complementation construction based on co-deterministic DAGs. Another line of future work is studying determinization constructions for FANBWs.
Finally, it is possible to use our work to improve the program-termination checking framework proposed in \cite{DBLP:conf/cav/HeizmannHP14} if one generalizes a terminating path to an FANBW.

\paragraph{Acknowledgment}
We thank Rachel Faran, Yih-Kuen Tsay and anonymous reviewers for their valuable inputs at different stages to this project.
This work is partially supported by Key-Area Research and Development Program of Guangdong Province (grant no. 2018B010107004), the National Natural Science Foundation of China (grant nos.\ 61761136011, 61532019), NSF grants IIS-1527668, CCF-1704883, IIS-1830549, and an award from the Maryland Procurement Office.

\nocite{*}
\bibliographystyle{eptcs}
\bibliography{generic}

\end{document}